\newenvironment{proof}[1][Proof]{\begin{trivlist}
\item[\hskip \labelsep {\bfseries #1}]}{\end{trivlist}}
\definecolor{lgray}{gray}{0.92}
\definecolor{lblue}{rgb}{0.90,0.90,1.00}
\definecolor{lyellow}{rgb}{1.00,1.00,0.70}
\newenvironment{codex}{\small\verbatim}{\endverbatim\normalsize}
\newtheorem{prop}{Proposition}
\newtheorem{df}{Definition}
\newcommand{\BI}[0]{\begin{itemize}}
\newcommand{\EI}[0]{\end{itemize}}
\newcommand{\BE}[0]{\begin{enumerate}}
\newcommand{\EE}[0]{\end{enumerate}}
\newcommand{\BX}[0]{\begin{codex}}
\newcommand{\EX}[0]{\end{codex}}
\def \bscale1 {0.25}
\def \bscale {0.25}
\def \N {\mathbb{N}}
\def \T {\mathbb{T}}
\newcommand{\FIG}[4]{
\begin{figure}[htbp]
\centering
{\includegraphics[scale=#3]{figs/#4}}
\caption{#2}
\label{#1}
\end{figure}
}
\begin{document}
%\conferenceinfo{PPDP'13,} {TBD}
%\CopyrightYear{2013}
%\copyrightdata{TBD}

\titlebanner{}      % These are ignored unless
\preprintfooter{}   % 'preprint' option specified.

\title{ 
  % A Declarative Specification of 
  Arithmetic Algorithms for
  Hereditarily Binary Natural Numbers
}
\subtitle{}
           
\authorinfo{Paul Tarau}
   {Department of Computer Science and Engineering\\
   University of North Texas}
   {\em tarau@cs.unt.edu}

\maketitle

\begin{abstract}

We study some essential arithmetic properties
of a new tree-based number representation,
{\em hereditarily binary numbers},
defined by applying recursively
run-length encoding of 
bijective base-2 digits.
  
Our representation expresses
giant numbers like
the largest known prime number and
its related perfect number as well as
the largest known
Woodall, Cullen, Proth, Sophie Germain and twin primes 
as trees of small sizes.

More importantly, our number representation 
supports novel algorithms that, in the best case,
collapse the complexity of various computations
by super-exponential factors and in the worse
case are within a constant factor of their
traditional counterparts.

As a result, it opens the door to a new world,
where arithmetic
operations are limited by the structural complexity
of their operands, rather than their bitsizes.

% A category with the (minimum) three required fields
\category{D.3.3}
{PROGRAMMING LANGUAGES}{Language Constructs and Features}
[Data types and structures]

\terms Algorithms, Languages, Theory

\keywords
arithmetic computations with giant numbers,
hereditary numbering systems, 
declarative specification of algorithms,
compressed number representations,
compact representation of large primes
\end{abstract}

\section{Introduction}
Number representations have evolved over time from the unary ``cave man'' representation
where one scratch on the wall represented a unit, to the base-n (and in particular base-2) number system, with the remarkable benefit of 
a logarithmic representation size. 
Over the last 1000 years,
this base-$n$ representation has proved to be unusually resilient,
partly because all practical computations could be performed
with reasonable efficiency within the notation.

However, when thinking ahead for the next 1000 years,
computations with very large numbers 
are likely to become more and more ``natural'', even if 
for now, they are mostly driven 
by purely theoretical interests in fields like
number theory, computability or multiverse cosmology.
Arguably, more practical needs of present and future cryptographic systems 
might also justify devising alternative numbering systems
with higher limits on the size of the numbers with which we
can perform tractable computations.

While {\em notations} like Knuth's ``up-arrow'' \cite{knuthUp} or 
tetration % \cite{tetration} 
are useful in describing very large numbers, they do not provide the ability to actually {\em compute} with them -- as, for instance, addition or multiplication with a natural number results in a number that
cannot be expressed with the notation anymore.
More exotic notations like Conway's surreal numbers \cite{surreal} 
involve uncountable cardinalities (they
contain real numbers as a subset) and are more useful for modeling
game-theoretical algorithms rather than common arithmetic computations.

The novel contribution of this paper is a tree-based numbering
system that {\em allows computations} with numbers comparable in size
with Knuth's ``arrow-up'' notation.
Moreover, these computations have a worse case complexity
that is comparable with the traditional  binary numbers,
while their best case complexity outperforms binary numbers by an arbitrary
tower of exponents factor. Simple operations like successor, 
multiplication by 2, exponent of 2 are practically constant time and
a number of other operations benefit from
significant complexity reductions.

For the curious reader, it is basically
a {\em hereditary number system} \cite{goodstein},
based on recursively applied {\em run-length} 
compression of a special (bijective) binary digit notation.

A concept of structural complexity is introduced, based on the
size of our tree representations and it is shown that several
``record holder'' large numbers like Mersenne, Cullen, Woodall and
Proth primes have unusually small structural complexities.

\begin{comment}
While primarily started from a purely theoretical interest,
it is not unlikely that our number representation will have
practical applications to fields requiring computations
with numbers beyond those fitting in a computer's memory
with bitstring representations.
\end{comment}

We have adopted
a {\em literate programming} style, i.e. the
code contained in the paper
forms a self-contained Haskell module (tested with ghc 7.6.3),
also available as a separate file at
\url{http://logic.cse.unt.edu/tarau/research/2013/hbin.hs} .
Alternatively, a Scala package implementing the same tree-based computations
is available from \url{http://code.google.com/p/giant-numbers/}.
We hope that this will encourage the reader to
experiment interactively and validate the
technical correctness of our claims. The {\bf Appendix}
contains a quick overview of the subset of Haskell we are
using as our executable function notation.

The paper is organized as follows.
Section \ref{bijbin} gives some background on bijective base-2 numbers
and iterated
function applications.
Section \ref{herbin} introduces hereditarily binary numbers.
Section \ref{succ} describes practically constant time successor and predecessor operations
on tree-represented numbers.
Section \ref{emu} shows an emulation of bijective base-2 with hereditarily binary numbers
and section \ref{arith} describes
novel algorithms for arithmetic operations taking advantage of our number
representation.
Section \ref{stru} defines a concept of structural complexity and studies
best and worse cases.
Section \ref{theo} describes efficient tree-representations 
of some important number-theoretical entities like Mersenne, Fermat,
Proth, Woodall primes. 
Section \ref{related} discusses related work.
Section \ref{concl} concludes the paper and discusses future work.

\begin{comment}
Our literate Haskell program is organized as the module {\tt HBin}:
\begin{code}
module HBin where
import Visuals
\end{code}
\end{comment}

\section{Natural numbers as iterated function applications} \label{bijbin}

Natural numbers can be seen as represented by iterated
applications of the functions $o(x)=2x+1$ and $i(x)=2x+2$
corresponding the so called
{\em bijective base-2} representation \cite{sac12} %\cite{wiki:bijbase}, 
together with the convention
that 0 is represented as the empty sequence. 
As each $n \in \N$ can be seen as a unique composition of these functions
we can make this precise as follows:
\begin{df}
We call bijective base-2 representation of $n \in \N$ the 
unique sequence of applications of functions $o$ and $i$ 
to  $\epsilon$ that evaluates to $n$.
\end{df}
With this representation,
and denoting the empty sequence $\epsilon$, one obtains 
$0=\epsilon, 
1=o~\epsilon,
2=i~\epsilon,
3=o (o~\epsilon),
4=i (o~\epsilon),
5=o (i~\epsilon)$ etc. 
and the following holds:
\begin{equation} \label{isucco}
  i(x)=o(x)+1
\end{equation}

\subsection{Properties of the iterated functions $o^n$ and $i^n$}

\begin{prop}\label{fastexp}
Let $f^n$ denote application of function $f$ $n$ times. Let $o(x)=2x+1$ and $i(x)=2x+2$, $s(x)=x+1$ and $s'(x)=x-1$. Then $k>0 \Rightarrow s(o^n(s'(k))=k2^n$ and $k>1 \Rightarrow s(s(i^n(s'(s'(k))))=k2^n$. In particular, $s(o^n(0))=2^n$ and $s(s(i^n(0)))=2^{n+1}$.
\end{prop}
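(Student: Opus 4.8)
The plan is to reduce both statements to closed forms for the iterates $o^n$ and $i^n$ and then specialize. First I would show, by induction on $n$, that $o^n(x) = 2^n(x+1) - 1$ for every $x$. The base case $n=0$ is immediate since $2^0(x+1)-1 = x$, and the inductive step is the one-line computation $o\bigl(2^n(x+1)-1\bigr) = 2\bigl(2^n(x+1)-1\bigr)+1 = 2^{n+1}(x+1)-1$. Substituting $x = s'(k) = k-1$, which is legitimate because $k>0$, gives $s(o^n(s'(k))) = o^n(k-1)+1 = 2^n k$; taking $k=1$ yields the special case $s(o^n(0)) = 2^n$.

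For the second identity I would similarly prove by induction on $n$ that $i^n(x) = 2^n(x+2) - 2$. Here the base case is $2^0(x+2)-2 = x$, and the step uses $i(y)=2(y+1)$ (equivalently, the relation $i(x)=o(x)+1$ from \eqref{isucco}), so that $i\bigl(2^n(x+2)-2\bigr) = 2\bigl(2^n(x+2)-1\bigr) = 2^{n+1}(x+2)-2$. Substituting $x = s'(s'(k)) = k-2$, legitimate because $k>1$, gives $s(s(i^n(s'(s'(k))))) = i^n(k-2)+2 = 2^n k$, and $k=2$ yields $s(s(i^n(0))) = 2^{n+1}$.

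I do not anticipate a genuine obstacle: each induction is two lines. The only point requiring care is that all intermediate quantities remain in $\N$, which is exactly what the hypotheses $k>0$ and $k>1$ guarantee (they make $k-1$, respectively $k-2$, natural numbers, and the telescoping subtractions $2^{n+1}(x+1)-2+1$ and $2^{n+1}(x+2)-4+2$ never go negative once $x\ge 0$). If one prefers to sidestep truncated subtraction altogether, the closed forms can be established over the integers and the restriction on $k$ imposed only at the final substitution step.
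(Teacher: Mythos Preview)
Your argument is correct and essentially the same as the paper's: both proceed by a two-line induction on $n$ for each of $o^n$ and $i^n$. The only cosmetic difference is ordering---you first establish the closed forms $o^n(x)=2^n(x+1)-1$ and $i^n(x)=2^n(x+2)-2$ and then specialize, whereas the paper inducts directly on the stated identity (using $o(s'(k))=s'(2k)$ in the step) and records the closed forms afterward as corollaries.
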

\begin{proof}
By induction.
Observe that for $n=0,k>0, s(o^0(s'(k))=k2^0$ because $s(s'(k)))=k$. 
Suppose that $P(n): k>0 \Rightarrow s(o^n(s'(k)))=k2^n$ holds. 
Then, assuming $k>0$, P(n+1) follows, given that
$s(o^{n+1}(s'(k)))=s(o^n(o(s'(k))))=s(o^n(s'(2k)))= 2k2^n= k2^{n+1}$.
Similarly, the second part of the proposition also follows by induction on $n$.
\end{proof}
The underlying arithmetic identities are:
\begin{equation}
k>0 \Rightarrow 1+o^n(k-1)=2^nk \label{on}
\end{equation}

\begin{equation}\label{in}
k>1 \Rightarrow 2+i^n(k-2)=2^nk
\end{equation}
from where one can deduce 
\begin{equation}\label{onk}
o^n(k)=2^n(k+1)-1
\end{equation}
\begin{equation}\label{ink}
i^n(k)=2^n(k+2)-2
\end{equation}
and in particular
\begin{equation}
o^n(0)=2^n-1
\end{equation}
\begin{equation}
i^n(0)=2^{n+1}-2
\end{equation}
Also, one can directly relate $o^k$ and $i^k$
\begin{equation}
o^n(k)+2^n=i^n(k)+1
\end{equation}
\begin{equation}
i^n(k)=o^n(k)+o^n(0)
\end{equation}
\begin{equation}
o^{n}(k+1)=i^n(k)+1
\end{equation}

\subsection{The iterated functions $o^n$, $i^n$ and their conjugacy results}
Results from the theory of {\em iterated functions} apply to our operations.
The following proposition is proven in \cite{iterfun}:
\begin{prop}
If $f(x)=ax+b$ with $a \neq 1$, 
let $x_0$ be such that $f(x_0)=x_0$ 
i.e. $x_0={b \over {1-a}}$. Then $f^n(x)=x_0+(x-x_0)a^n$.
\end{prop}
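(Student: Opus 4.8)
The plan is to prove this by a short induction on $n$, leaning on the single algebraic fact that $x_0$ is a fixed point, i.e. $a x_0 + b = x_0$. First I would record the base case $n = 0$: here $f^0$ is the identity, so $f^0(x) = x$, while the claimed formula gives $x_0 + (x - x_0)a^0 = x_0 + (x - x_0) = x$, so the two sides agree.

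For the inductive step I would assume $f^n(x) = x_0 + (x - x_0)a^n$ and compute $f^{n+1}(x) = f(f^n(x)) = a f^n(x) + b = a\bigl(x_0 + (x - x_0)a^n\bigr) + b = (a x_0 + b) + (x - x_0)a^{n+1}$. Replacing $a x_0 + b$ by $x_0$ via the fixed-point identity yields $x_0 + (x - x_0)a^{n+1}$, which is exactly the formula at $n+1$, closing the induction.

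A conceptually cleaner variant, which I would mention, is to observe that the affine change of coordinates $T(y) = y - x_0$ conjugates $f$ to the linear map $L(y) = a y$: indeed $T(f(x)) = a x + b - x_0 = a(x - x_0) + (a x_0 + b - x_0) = a(x - x_0) = L(T(x))$. Hence $T \circ f^n = L^n \circ T$ by a trivial induction, and since $L^n(y) = a^n y$, we get $f^n(x) - x_0 = a^n (x - x_0)$, which is the stated identity. The hypothesis $a \neq 1$ is what makes $x_0 = b/(1-a)$ well defined; for $a = 1$ the map has no finite fixed point unless $b = 0$, so the restriction is essential, not cosmetic.

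There is no genuine obstacle here: the only points requiring a little care are to invoke the fixed-point equation in the form $a x_0 + b = x_0$ at precisely the right moment, and to treat the base case $n = 0$ so that the ``$a^0$'' appearing in the formula is read as $1$ even when $a = 0$. Everything else is a one-line computation.
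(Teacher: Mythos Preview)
Your proof is correct. The paper itself does not give a proof of this proposition at all: it simply states that the result is proven in the cited reference on iterated functions, so there is no in-paper argument to compare against. Your direct induction (and the alternative conjugacy argument shifting coordinates by $x_0$) is exactly the standard self-contained derivation and would serve perfectly well in place of the citation.
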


For $a=2,b=1$ and respectively $a=2,b=2$ this provides an 
alternative proof for  proposition \ref{fastexp}.

A few properties similar to {\em topological conjugation} apply to our
functions
\begin{df}
We call two functions $f,g$ conjugates through $h$ if $h$ is a bijection such that 
$g=h^{-1} \circ f \circ h$, where $\circ$ denotes function composition.
\end{df}
\begin{prop}\label{conjn}
If $f,g$ are conjugates through $h$ then $f^n$ and $g^n$ are too, i.e. 
$g^n=h^{-1} \circ {f^n} \circ h$.
\end{prop}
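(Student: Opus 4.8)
The plan is to proceed by induction on $n$, exploiting the telescoping cancellation of adjacent factors $h \circ h^{-1}$ that appears when the $n$-fold composition $g \circ g \circ \cdots \circ g$ is written out. First I would settle the base case: for $n=0$ both $f^0$ and $g^0$ are the identity map, and since $h$ is a bijection we have $h^{-1} \circ \mathrm{id} \circ h = h^{-1} \circ h = \mathrm{id}$, so $g^0 = h^{-1} \circ f^0 \circ h$ holds trivially. (One could equally well take $n=1$ as the base case, where the claim is precisely the hypothesis $g = h^{-1} \circ f \circ h$.)

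For the inductive step I would assume $g^n = h^{-1} \circ f^n \circ h$ and compute
$g^{n+1} = g \circ g^n = (h^{-1} \circ f \circ h) \circ (h^{-1} \circ f^n \circ h)$. Using associativity of composition to regroup, the middle pair $h \circ h^{-1}$ collapses to the identity, leaving $h^{-1} \circ f \circ f^n \circ h = h^{-1} \circ f^{n+1} \circ h$, which is exactly the claim for $n+1$. This closes the induction.

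I do not expect any real obstacle here: the argument is pure bookkeeping with function composition. The only two points that deserve to be stated explicitly are that $h$ being a bijection is what licenses $h \circ h^{-1} = \mathrm{id}$, and that composition is associative so the interior cancellation is legitimate. I would add as a closing remark that the same telescoping shows $f^{-1}$ and $g^{-1}$ are conjugate through $h$ whenever $f$ is invertible, so the result in fact holds for all integer exponents; this is the form in which it gets used together with the pairs $o,i$ and $s,s'$ elsewhere in the paper.
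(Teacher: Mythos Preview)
Your argument is correct and matches the paper's own proof, which simply says ``by induction, using the fact that $h^{-1}\circ h=\lambda x.x=h\circ h^{-1}$''; you have merely spelled out the telescoping step in full.
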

\begin{proof}
By induction, using the fact that $h^{-1} \circ h = \lambda x.x = h \circ h^{-1} $.
\end{proof}

\begin{prop}
$o^n$ and $i^n$ are conjugates with respect to $s$ and $s'$, i.e. the following 2 identities hold:
\end{prop}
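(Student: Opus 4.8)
The plan is to obtain the two stated identities, namely $o^n = s \circ i^n \circ s'$ and $i^n = s' \circ o^n \circ s$, as immediate consequences of the conjugacy-of-iterates result, Proposition \ref{conjn}. The only genuine work is the $n=1$ case, i.e. checking that $o$ and $i$ are conjugate through $s$ (equivalently through $s'$); after that Proposition \ref{conjn} does everything.

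First I would verify the base identity $i = s' \circ o \circ s$ by a one-line computation: $s'(o(s(x))) = s'(o(x+1)) = s'(2(x+1)+1) = s'(2x+3) = 2x+2 = i(x)$. Equivalently, this is just (\ref{isucco}) combined with the obvious $o(x+1) = o(x)+2$. Since $s$ and $s'$ are mutually inverse ($s \circ s' = s' \circ s = \lambda x.x$), $s$ is a bijection with $s^{-1} = s'$, so the displayed equation says exactly that $o$ and $i$ are conjugate through $s$; reading the same equation as $o = s \circ i \circ s'$ says they are conjugate through $s'$.

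Then I would invoke Proposition \ref{conjn} twice: applied with $h = s$ it upgrades $i = s' \circ o \circ s$ to $i^n = s' \circ o^n \circ s$, and applied with $h = s'$ it upgrades $o = s \circ i \circ s'$ to $o^n = s \circ i^n \circ s'$ (alternatively, the second identity follows from the first by composing with $s$ on the left and $s'$ on the right and cancelling). That gives both claimed identities. The one subtlety to flag is the domain: as a function on $\N$, $s'$ is not total (undefined at $0$) and $s$ is not onto, so strictly these are identities over $\mathbb{Z}$, or over the subsets of $\N$ on which every intermediate value stays nonnegative --- which is precisely where the side conditions $k>0$, $k>1$ of Proposition \ref{fastexp} originate. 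If one prefers to remain inside $\N$ and bypass the conjugacy apparatus, the same two identities also fall out of a direct induction on $n$: the base case is the computation above, and the inductive step pushes one more application of $o$ (resp. $i$) past the $s$/$s'$ wrappers using that same base identity. I expect the bookkeeping of these boundary conditions to be the only mildly fiddly point; the algebra itself is routine.
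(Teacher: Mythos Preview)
Your proposal is correct and follows the same approach as the paper: establish the base conjugacy between $o$ and $i$ via $s$ (the paper cites \ref{isucco}, i.e. $i=s\circ o$, which is equivalent to your explicit computation) and then apply Proposition~\ref{conjn}. Your treatment is simply more explicit, and the domain caveat you note is a reasonable addition that the paper leaves implicit.
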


\begin{equation}
o^{n}(k)=s(i^n(s'(k)))
\end{equation}

\begin{equation}
i^n(k)=s'(o^{n}(s(k)))
\end{equation}
\begin{proof}
An immediate consequence of $i=s \circ o$ (by \ref{isucco}) and Prop. \ref{conjn}.
\end{proof}

Note also that proposition \ref{fastexp} can be seen as stating that $o^n$ is the conjugate of the leftshift operation $l(n,x)=2^{n}x$ through $s(x)=x+1$ (eq. \ref{on}) 
and so is $i^n$ through $s \circ s$ (eq. \ref{in}).

The following equations relate successor and predecessor to the iterated applications of $o$ and $i$:

\begin{equation}\label{sonk}
s(o^{n}(k))=i(o^{s'(n)}(k))
\end{equation}

\begin{equation}
s(i^n(k))=o^{n}(s(k))
\end{equation}

\begin{equation}
s'(o^n(k))=i^n(s'(k))
\end{equation}

\begin{equation}\label{sink}
s'(i^{n}(k))=o(i^{s'(n)}(k))
\end{equation}

By setting $k=2m+1$ in eq. \ref{on} we obtain:
\begin{equation}
1+o^n(2m)=2^n(2m+1)
\end{equation}
As the right side of this equation expresses
a bijection between $\N \times \N$ and $\N^{+}$, so does the left side, i.e. the function $c(m,n)=1+o^n(2m)$ maps pairs (m,n) to unique values in $\N^{+}$.

Similarly, by setting $k=2m+1$ in eq. \ref{in} we obtain:
\begin{equation}
2+i^n(2m-1)=2^n(2m+1)
\end{equation}

\section{Hereditarily binary numbers} \label{herbin}

\subsection{Hereditary Number Systems}
Let us observe that conventional number systems, as well as the bijective base-2 numeration system described so far, represent 
blocks of 0 and 1 digits somewhat naively - one 
digit for each element of the block.
Alternatively, one might think that counting them 
and representing the resulting
counters as {\em binary numbers} would be also possible. 
But then, the same principle could be  applied recursively.
So instead of representing each block of 0 or 1 digits by as many
symbols as the size of the block -- essentially a {\em unary} representation --
one could also encode the number of elements in such a block using
a {\em binary} representation.

This brings us to the idea of hereditary number systems. At
our best knowledge the first instance of such a system is used in
\cite{goodstein}, by iterating the polynomial base-n notation
to the exponents used in the notation.
We next explore a hereditary number representation
that implements the simple idea of representing
the number of contiguous 0 or 1 digits in a number,
as bijective base-2
numbers, recursively.

%\subsection{The natural number corresponding to a hereditarily binary number}

\subsection{Hereditarily binary numbers as a data type}

First, we define a data type for our tree represented natural numbers,
that we call {\em hereditarily binary numbers} to emphasize that
{\em binary} rather than {\em unary}
encoding is recursively used in their representation.
\begin{df}
The data type $\T$ of the set of hereditarily binary numbers 
is defined by the Haskell declaration:
\end{df}
\begin{code}
data T = E | V T [T]  | W T [T] deriving (Eq,Show,Read)
\end{code}
that automatically derives the equality relation ``{\tt ==}'', as well
as reading and string representation. For shortness,
We will call the members of type $\T$ {\em terms}.
The intuition behind the disjoint union type $\T$ is the following:
\begin{itemize}
\item The term {\tt E} (empty leaf) corresponds to zero 
\item the term {\tt V x xs} counts the number {\tt x+1} of {\tt o} applications followed by an {\em alternation}
of similar counts of {\tt i} and {\tt o} applications

\item the term {\tt W x xs} counts the number {\tt x+1} of {\tt i} applications followed by an {\em alternation}
of similar counts of {\tt o} and {\tt i} applications

\item the same principle is applied recursively for the counters, until the empty sequence is reached
\end{itemize}
One can see this process as run-length compressed bijective base-2 numbers, represented as trees
with either empty leaves or at least one branch, after applying the encoding recursively.

These trees can be specified in the proof assistant {\em Coq} \cite{Coq:manual} as the type {\tt T}:
\begin{codex}
Require Import List.
Inductive T : Type :=
| E : T
| V : T -> list T -> T
| W : T -> list T -> T.
\end{codex}
which automatically generates the induction principle:
\begin{codex}
Coq < Check T_ind.
hbNat_ind
     : forall P : T -> Prop,
       P E ->
       (forall h : T, P h -> 
          forall l : list T, P (V h l)) ->
       (forall h : T, P h -> 
          forall l : list T, P (W h l)) ->
       forall h : T, P h
\end{codex}

\begin{df}
The function $n:\T \to \N$  shown in equation {\bf \ref{deft}} defines the unique natural number associated to a term of type $\T$.
\begin{figure*}
\begin{equation}\label{deft}
n(t)=
\begin{cases}
0&  \text{if $t=~${\tt E}},\\
2^{n(x)+1}-1&  \text{if $t=~${\tt V x []}},\\

{(n(u)+1)}2^{n(x)+1}-1&  \text{if $t=~${\tt V x (y:xs)} and $u=~${\tt W y xs}},\\

2^{n(x)+2}-2&  \text{if $t=~${\tt W x []}},\\

{(n(u)+2)}2^{n(x)+1}-1&  \text{if $t=~${\tt W x (y:xs)} and $u=~${\tt V y xs}}.
\end{cases}
\end{equation}
\end{figure*}
\end{df}
For instance, the computation of n(W (V E []) [E,E,E])$~=42$ expands to
$({({({2^{{0}+1}-1}+2)2^{{0} + 1}-2}+1)2^{{0} + 1}-1}+2)2^{{2^{{0}+1}-1} + 1}-2$.
The Haskell equivalent of equation (\ref{deft}) is:
\begin{code}
n E = 0
n (V x []) = 2^(n x +1)-1
n (V x (y:xs)) = (n u+1)*2^(n x + 1)-1 where u = W y xs
n (W x []) = 2^(n x+2)-2
n (W x (y:xs)) = (n u+2)*2^(n x + 1)-2 where u = V y xs  
\end{code}
The following example illustrates the values associated with
the first few natural numbers.
\begin{codex}
0 = n E
1 = n (V E [])
2 = n (W E [])
3 = n (V (V E []) [])
4 = n (W E [E])
5 = n (V E [E])
\end{codex}
Note that a term of the form {\tt V x xs} represents an
odd number $\in \N^+$ and a term of the form {\tt W x xs} represents an
even number $\in \N^+$. The following holds:
\begin{prop}
$n:\T \to \N$ is a bijection, i.e., each term canonically represents the
corresponding natural number.
\end{prop}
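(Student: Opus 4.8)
The plan is to prove the sharper statement that every $m\in\N$ has \emph{exactly one} preimage under $n$; bijectivity then follows at once, existence giving surjectivity and uniqueness giving injectivity. The argument is a strong (course-of-values) induction on $m$, laid out so as to mirror the five clauses of equation~(\ref{deft}).

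First I would record the elementary facts already noted just above equation~(\ref{deft}). From the five clauses one reads off, using only $n(x),n(u)\ge 0$, that $n(\mathtt{E})=0$, that $n(\mathtt{V}\,x\,xs)$ is odd and lies in $\N^{+}$, and that $n(\mathtt{W}\,x\,xs)$ is even and lies in $\N^{+}$; in particular the subtractions in (\ref{deft}) never underflow, so $n$ is well defined, and its three constructors land in the three blocks of the partition $\N=\{0\}\,\cup\,(2\N+1)\,\cup\,2\N^{+}$. Hence for a fixed target $m$ the head constructor of any preimage is forced: $m=0$ forces $\mathtt{E}$ (and no positive term can hit $0$), odd $m$ forces a term $\mathtt{V}\,x\,xs$, and even $m>0$ forces a term $\mathtt{W}\,x\,xs$.

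The core of the proof is a unique-decomposition lemma matching the recursion in (\ref{deft}). For odd $m$, write $m+1=2^{a}r$ with $a\ge 1$ and $r$ odd (the $2$-adic valuation and odd part of $m+1$, which exist and are unique). If $r=1$, then the only preimage is $\mathtt{V}\,x\,[\,]$ with $n(x)=a-1$: a non-empty list would force into $m+1=2^{a}$ an extra factor $n(u)+1\ge 3$, which is odd because $u$ has the form $\mathtt{W}\,y\,xs$, and that is impossible. If $r\ge 3$, then an empty list is impossible (it would make $m+1$ a power of two), so the preimage is $\mathtt{V}\,x\,(y:xs)$ with $u=\mathtt{W}\,y\,xs$, and comparing valuations and odd parts of $m+1=(n(u)+1)2^{n(x)+1}$ yields $n(x)=a-1$ and $n(u)=r-1$. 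The even case is symmetric, writing $m+2=2^{a}s$ with $s$ odd: $s=1$ forces $\mathtt{W}\,x\,[\,]$ with $n(x)=a-2$ (here $m+2\ge 4$, so $a\ge 2$), while $s\ge 3$ forces $\mathtt{W}\,x\,(y:xs)$ with $u=\mathtt{V}\,y\,xs$, $n(x)=a-1$, and $n(u)=s-2$. In every branch, a one-line estimate (from $2^{a}\le m+1$ resp.\ $2^{a}\le m+2$, and $r\le(m+1)/2$ resp.\ $s\le(m+2)/2$) shows $n(x)<m$ and $n(u)<m$, so the induction hypothesis supplies the unique preimages $x$ and $u$; and since the parity of $n(u)$ already pins its head constructor down ($\mathtt{W}$ inside a $\mathtt{V}$-term, $\mathtt{V}$ inside a $\mathtt{W}$-term), the reassembled term is the unique preimage of $m$, closing the induction.

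The step that needs the most care is the inner case split: verifying that within each of the $\mathtt{V}$- and $\mathtt{W}$-families the empty-list alternative and the non-empty-list alternative are genuinely disjoint and exhaustive --- this is exactly $r=1$ versus $r\ge 3$ (resp.\ $s=1$ versus $s\ge 3$) together with the oddness of the extra factor $n(u)+1$ (resp.\ $n(u)+2$) --- and keeping track of the mildly asymmetric exponent shift that the $-2$ in the $\mathtt{W}$ clauses introduces, namely $n(x)=a-2$ in the empty-list subcase against $n(x)=a-1$ otherwise. The strict-decrease inequalities that justify the strong induction are routine but must be stated. An essentially equivalent presentation, closer to the literate-programming spirit of the paper, would be to \emph{define} the inverse map $t:\N\to\T$ by exactly these decompositions and then prove $n\circ t=\mathrm{id}_{\N}$ by strong induction on $\N$ and $t\circ n=\mathrm{id}_{\T}$ by structural induction on $\T$; the obstacles are the same.
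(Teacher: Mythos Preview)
Your proof is correct and self-contained. The paper, however, takes a much shorter conceptual route: it simply invokes equations (\ref{onk}) and (\ref{ink}), namely $o^{n}(k)=2^{n}(k+1)-1$ and $i^{n}(k)=2^{n}(k+2)-2$, to observe that the five clauses of (\ref{deft}) literally unwind one block of iterated $o$- or $i$-applications applied to the value coded by the remainder of the term. In other words, $n$ is the composite of two maps already known to be bijections: the bijective base-$2$ correspondence between $\N$ and finite $o/i$-sequences from section~\ref{bijbin}, and the run-length encoding of such alternating sequences, which is precisely what the $\mathtt{V}/\mathtt{W}$ constructors record. Bijectivity of $n$ then falls out without any further arithmetic.

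Your strong induction via the $2$-adic decomposition of $m+1$ and $m+2$ is essentially a direct arithmetical unpacking of the same idea---your exponent $a$ is the first block length and your odd cofactor $r$ or $s$ (shifted by $1$ or $2$) is the value of the tail---but it avoids any appeal to the bijective base-$2$ background and would be the natural choice if one wanted a proof a proof assistant could check with no imported lemmas. The paper's argument is terser but leans on the reader accepting the run-length bijection implicitly; yours makes the well-foundedness, the parity bookkeeping, and the asymmetric exponent shift in the $\mathtt{W}$ clauses explicit, which is exactly what a formalization would require.
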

\begin{proof}
It follows from the equations \ref{onk}, \ref{ink} by 
replacing the power of 2 functions with the
corresponding iterated applications of
$o$ and $i$.
\end{proof}

\section{Successor ({\tt s}) and predecessor ({\tt s'})} \label{succ}

We will now specify successor and predecessor on data type $\T$
through two mutually recursive functions defined in the
proof assistant {\em Coq}
as

\begin{codex}
Fixpoint s (t:T) :=
match t with
| E => V E nil
| V E nil =>  W E nil
| V E (x::xs) =>  W (s x) xs 
| V z xs => W E (s' z :: xs)
| W z nil => V (s z) nil
| W z (E::nil) => V z (E::nil)
| W z (E::y::ys) => V z (s y::ys)
| W z (x::xs) => V z (E::s' x::xs)
end 
with 
s' (t:T) : T :=
match t with
| V E nil => E
| V z nil => W (s' z) nil
| V z (E::nil) =>  W z (E::nil)
| V z (E::x::xs) =>  W z (s x::xs)  
| V z (x::xs) =>  W z (E::s' x::xs) 
| W E nil => V E nil
| W E (x::xs) => V (s x) xs 
| W z xs => V E (s' z::xs)
| E => E (* Coq wants t total on T *)
end.
\end{codex}
Note that our definitions are conforming with {\em Coq's}
requirement for automatically guaranteeing termination, i.e.
they use only induction on the structure of the terms.
Once accepting the definition, {\em Coq} allows extraction
of equivalent Haskell code, that, in a more human readable form
looks as follows:
\begin{code}
s E = V E []
s (V E []) =  W E []
s (V E (x:xs)) =  W (s x) xs  
s (V z xs) = W E (s' z : xs)
s (W z [])  = V (s z) []
s (W z [E]) = V z [E]
s (W z (E:y:ys)) = V z (s y:ys)
s (W z (x:xs)) = V z (E:s' x:xs)
\end{code}
\begin{code}
s' (V E []) = E
s' (V z []) = W (s' z) []
s' (V z [E]) =  W z [E]
s' (V z (E:x:xs)) =  W z (s x:xs)  
s' (V z (x:xs)) =  W z (E:s' x:xs)  
s' (W E []) = V E []
s' (W E (x:xs)) = V (s x) xs 
s' (W z xs) = V E (s' z:xs)
\end{code}

The following holds:
\begin{prop}
Denote $\T^+=\T-\{\text{\tt E}\}$.
The functions $s:\T \to \T^+$ and $s':\T^+ \to \T$ are inverses.
\end{prop}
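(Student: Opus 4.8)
The plan is to reduce the statement to the already-established fact that $n:\T\to\N$ is a bijection. Concretely I would prove two auxiliary facts: (i) $s(t)\neq\mathtt{E}$ for every $t\in\T$, so that $s$ really does land in $\T^+$; and (ii) the semantic equations $n(s(t))=n(t)+1$ for all $t\in\T$ and $n(s'(t))=n(t)-1$ for all $t\in\T^+$. Granting (ii), for any $t\in\T$ we get $n(s'(s(t)))=n(s(t))-1=n(t)$, and injectivity of $n$ forces $s'(s(t))=t$; symmetrically, for $t\in\T^+$ (so $n(t)\geq1$, since $n$ is a bijection with $n(\mathtt{E})=0$) we get $n(s(s'(t)))=n(s'(t))+1=n(t)$, hence $s(s'(t))=t$. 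Together with (i), this exhibits $s$ and $s'$ as mutually inverse bijections between $\T$ and $\T^+$, which is the claim.

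Fact (i) is immediate: inspection of the eight defining clauses of $s$ shows that every right-hand side has the form $V\,\cdots$ or $W\,\cdots$, never $\mathtt{E}$.

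The substance is fact (ii), which I would prove by a single simultaneous induction on the structure of $t$ (equivalently, on term size). This is legitimate precisely because, as the text notes, every recursive call of $s$ or $s'$ is on a proper subterm, so the two statements can be bundled into one induction hypothesis: for all $t$, $n(s(t))=n(t)+1$, and if $t\neq\mathtt{E}$ then $n(s'(t))=n(t)-1$. For each of the eight clauses of $s$ (and each of the eight of $s'$) one unfolds the clause, applies the defining equation \ref{deft} to both sides, invokes the induction hypothesis on whichever subterm is fed to the inner $s$ or $s'$ call, and checks the resulting identity between closed-form expressions in powers of two. It helps to first record the two uniform shape lemmas $n(V\,a\,xs)=L_{xs}\,2^{n(a)+1}-1$ and $n(W\,a\,xs)=K_{xs}\,2^{n(a)+1}-2$, where $L_{xs}\geq1$ and $K_{xs}\geq2$ depend only on the list $xs$ (read off from \ref{deft}, with $L_{[]}=1$, $K_{[]}=2$); then in the clauses that replace a head $a$ by $s\,a$ or $s'\,a$, the induction hypothesis turns a factor $2^{n(a)+1}$ into $2^{n(a)+2}$ or $2^{n(a)}$ respectively, and the arithmetic identities of Section \ref{bijbin} (e.g. \ref{onk}, \ref{ink}, \ref{sonk}, \ref{sink}) make the two sides coincide.

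I expect the bookkeeping in this eight-by-eight case analysis to be the only real obstacle: there is no conceptual difficulty, but one must be careful with the boundary clauses (empty list, or head $\mathtt{E}$) where the ``$x{+}1$''/``$x{+}2$'' offsets in the definition of $n$ change the shape of the formula, and with keeping the two mutual-recursion hypotheses straight. A purely syntactic alternative — proving $s'(s(t))=t$ and $s(s'(t))=t$ directly by structural induction, matching clauses of $s$ against clauses of $s'$ — is possible but messier, because the clause partitions of $s$ and $s'$ are not aligned; routing through the bijection $n$ keeps the argument short.
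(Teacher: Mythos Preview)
Your approach is correct, but it differs from the paper's. The paper takes precisely the route you reject as ``messier'': it argues directly by structural induction that $s'(s(t))=t$ and $s(s'(t))=t$, observing that the eight clauses of $s$ and the eight clauses of $s'$ match up one-to-one (each $V$-pattern in $s$ corresponds to a $W$-pattern in $s'$ and vice versa). Your claim that ``the clause partitions of $s$ and $s'$ are not aligned'' is simply wrong: they are aligned, and once you notice this the syntactic proof is short. For example, $s(V\,E\,(x{:}xs))=W\,(s\,x)\,xs$ lands, since $s\,x\neq\mathtt{E}$, in the pattern $s'(W\,z\,xs)=V\,E\,(s'\,z:xs)$, giving back $V\,E\,(s'(s\,x){:}xs)=V\,E\,(x{:}xs)$ by the induction hypothesis; the other seven cases pair off just as mechanically.

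What your semantic detour through $n$ buys is an independently useful fact, namely $n(s(t))=n(t)+1$ and $n(s'(t))=n(t)-1$, which the paper does not isolate explicitly. The cost is the arithmetic verification against equation~(\ref{deft}) in each clause. The paper's syntactic argument avoids arithmetic entirely and is shorter, but it establishes only the mutual-inverse property, not semantic correctness.
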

\begin{proof}
It follows by structural induction after observing that 
patterns for {\tt V} in {\tt s} correspond one by one to patterns for
{\tt W} in {\tt s'} and vice versa.
\end{proof}

More generally, it can be proved by structural induction that Peano's axioms hold
and as a result $<\T,E,s>$ is a Peano algebra.

Note also that calls to {\tt s,s'} in {\tt s} or {\tt s'} happen on terms that are
(roughly) logarithmic in the bitsize of their operands.  One can therefore 
assume that their complexity, computed by an {\em iterated logarithm},
is practically constant.

\section{Emulating the bijective base-2 operations {\tt o}, $i$} \label{emu}

To be of any practical interest, we will need to ensure that
our data type $\T$ emulates also binary arithmetic. We will
first show that it does, and next we will show that
on a number of operations like exponent of 2 or multiplication
by an exponent of 2, it significantly lowers complexity.

Intuitively the first step should be easy, as we need
to express single applications or ``un-applications'' of
{\tt o} and {\tt i} in terms of their iterates
encapsulated in the {\tt V} and {\tt W} terms. 
% --  Isomorphism with N seen as an algebra on {0,1}*.

First we emulate single applications of {\tt o} and {\tt i} seen
as virtual ``constructors'' on type \verb~data B = Zero | O B | I B~.
\begin{code}
o E = V E []
o (V x xs) = V (s x) xs
o (W x xs) = V E (x:xs)

i E = W E []
i (V x xs) = W E (x:xs)
i (W x xs) = W (s x) xs
\end{code}
Next we emulate the corresponding ``destructors'' that can be seen
as ``un-applying'' a single instance of {\tt o} or {\tt i}.
\begin{code}  
o' (V E []) = E
o' (V E (x:xs)) = W x xs
o' (V x xs) = V (s' x) xs

i' (W E []) = E
i' (W E (x:xs)) = V x xs
i' (W x xs) = W (s' x) xs
\end{code}
Finally the ``recognizers'' {\tt o\_} (corresponding to {\em odd} numbers)
and {\tt i\_} (corresponding to {\em even} numbers)
simply detect {\tt V} and {\tt W} corresponding
to {\tt o} (and respectively {\tt i}) being the last function applied. 
\begin{code}
o_ (V _ _ ) = True
o_ _ = False

i_ (W _ _ ) = True
i_ _ = False
\end{code}
Note that each of the functions {\tt o,o'} and 
{\tt i,i'} call {\tt s} and {\tt s'} on a term that is (roughly)
logarithmically smaller. It follows that
\begin{prop}
Assuming {\tt s,s'} constant time, {\tt o,o',o,i'} are also constant time.
\end{prop}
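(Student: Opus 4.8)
The plan is to argue by direct inspection of the defining equations, exploiting the fact that none of {\tt o}, {\tt o'}, {\tt i}, {\tt i'} is recursive in itself: the only recursive calls appearing on their right-hand sides are calls to {\tt s} or {\tt s'}, and (as one can check) at most one such call occurs per clause. So the cost of a call reduces, up to a constant, to the cost of a single {\tt s}/{\tt s'} invocation, which is $O(1)$ by hypothesis.

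First I would enumerate the clauses of each of the four functions and classify the work done in a fired clause into three parts: (a) the pattern match against the head constructor ({\tt E}, {\tt V}, {\tt W}) together with, where needed, a test of whether the list component is {\tt []} or of the form $x\!:\!xs$ --- a bounded number of tag comparisons, hence $O(1)$; (b) at most one invocation of {\tt s} or {\tt s'}, costing $O(1)$ by assumption; (c) the construction of the result, which in every clause is a single application of {\tt V} or {\tt W} to a head term and to a list obtained from the input list by either reusing it verbatim, prepending one element, or dropping one element. Since cons, head, and tail are constant-time on the underlying list representation (the tail is shared, not copied), part (c) is also $O(1)$. Summing the three bounds gives a constant bound on the evaluation of any one clause; since exactly one clause fires and there is no self-recursion, each of {\tt o}, {\tt o'}, {\tt i}, {\tt i'} is constant time.

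The only point needing care --- and the place the argument would break if the data layout were different --- is claim (c) together with the ``at most one {\tt s}/{\tt s'} call per clause'' observation: one must confirm that the list operations do not implicitly traverse or copy the spine, and that no clause nests an {\tt s}/{\tt s'} call inside another (which would compound the hypothesis and spoil the constant bound). A quick scan of the listed clauses settles both: every branch touches at most the outermost cons cell of the list argument, and {\tt s} or {\tt s'} appears at most once and never as an argument to another such call. Note this is precisely the conditional companion of the earlier remark that {\tt s}, {\tt s'} are ``practically constant time'' via an iterated logarithm; here we simply take that antecedent as given and propagate it one level outward.
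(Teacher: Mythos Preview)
Your proposal is correct and follows the same approach as the paper, which itself offers only the one-sentence remark preceding the proposition (that each of {\tt o,o',i,i'} makes a call to {\tt s} or {\tt s'}) as justification. Your version is a more careful elaboration of that same observation, explicitly checking the pattern-matching, the single {\tt s}/{\tt s'} call per clause, and the constant-cost list operations.
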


\begin{df}
The function $t:\N \to \T$ defines the unique tree of type $\T$ associated to a natural number as follows:
\begin{equation}
t(x)=
\begin{cases}
{\tt E}&  \text{if $x=~${\tt 0}},\\
\text{\tt o}(t({{x-1}\over 2})) &  \text{if $x>0$ and $x$ is odd},\\
\text{\tt i}(t({x \over 2}-1)) &  \text{if $x>0$ and $x$ is even},\\
\end{cases}
\end{equation}
\end{df}

We can now define the corresponding Haskell function {\tt t}:$\T \to \N$ that converts from trees to
natural numbers. 

Note that {\tt pred x=x-1} and {\tt div} is integer division.
\begin{code}
t 0 = E
t x | x>0 && odd x = o(t (div (pred x) 2))
t x | x>0 && even x = i(t (pred (div x 2)))
\end{code}
The following holds: 

\begin{prop}
Let {\tt id} denote $\lambda x.x$ and $\circ$ function composition. Then, on their respective domains
\begin{equation}
t \circ n = id,~~
n \circ t = id
\end{equation}
\end{prop}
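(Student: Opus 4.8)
\section*{Proof strategy for the final proposition}

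The plan is to lean on the fact, already established above, that $n:\T\to\N$ is a bijection. Granting that, it suffices to prove the single equation $n(t(x))=x$ for every $x\in\N$, i.e.\ that $t$ is a right inverse of $n$: injectivity of $n$ then upgrades this to $t(n(s))=s$ for every $s\in\T$, since from $n(t(n(s)))=n(s)$ one cancels $n$ on the left. So the real content is $n\circ t=\mathrm{id}_\N$, and for that the crucial intermediate facts are that the ``virtual constructors'' \texttt{o} and \texttt{i} on $\T$ genuinely implement $x\mapsto 2x+1$ and $x\mapsto 2x+2$ at the level of values.

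I would first establish the arithmetic correctness of successor and predecessor: $n(s(u))=n(u)+1$ for all $u\in\T$ and, dually, $n(s'(u))=n(u)-1$ for all $u\in\T^{+}$. These are proved simultaneously by induction on the size of $u$: each of the eight clauses of \texttt{s} (and, in lockstep, of \texttt{s'}) is unfolded, the matching closed-form case of $n$ from equation \ref{deft} is substituted, and the recursive occurrences of \texttt{s}/\texttt{s'} --- which always sit on a proper subterm --- are discharged by the induction hypothesis; what remains in each case is an elementary identity among powers of two, essentially the content of \ref{sonk}--\ref{sink} read one level down. Using this, I would then prove the \emph{emulation lemma}: for every $x\in\T$, $n(o(x))=2n(x)+1$ and $n(i(x))=2n(x)+2$. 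This is a three-way case split according to whether $x$ is \texttt{E}, a \texttt{V}-term, or a \texttt{W}-term --- matching the three defining clauses of \texttt{o} (resp.\ \texttt{i}) --- with each branch a one-line computation from equation \ref{deft}, invoking successor-correctness only in the clause sending \texttt{V x xs} to \texttt{V (s x) xs} (resp.\ \texttt{W x xs} to \texttt{W (s x) xs}). The destructors \texttt{o'},\texttt{i'} play no role here, since \texttt{t} is built purely from \texttt{o} and \texttt{i}.

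With the emulation lemma in hand, $n(t(x))=x$ follows by complete induction on $x\in\N$ (the recursion defining \texttt{t} terminates, its argument strictly decreasing). For $x=0$, $n(t(0))=n(\texttt{E})=0$. For $x>0$ odd, $t(x)=o(t(\tfrac{x-1}{2}))$ with $\tfrac{x-1}{2}<x$, so by the induction hypothesis and the emulation lemma $n(t(x))=2\,n(t(\tfrac{x-1}{2}))+1=2\cdot\tfrac{x-1}{2}+1=x$; the even case is identical with \texttt{i} and $\tfrac{x}{2}-1$. This yields $n\circ t=\mathrm{id}_\N$, whence $t\circ n=\mathrm{id}_\T$ by the cancellation argument of the first paragraph.

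The only genuinely delicate part is the bookkeeping in the successor-correctness step. The clauses of \texttt{s}, \texttt{s'}, \texttt{o}, \texttt{i} pattern-match simultaneously on the head subterm and on whether the tail list is empty or of the form \texttt{y:xs}, while the cases of $n$ in equation \ref{deft} are indexed the same way but with the switch $u=\texttt{W y xs}$ / $u=\texttt{V y xs}$ alternating between the \texttt{V}- and \texttt{W}-formulas, so one must align these decompositions precisely. Moreover a \texttt{V}-clause of \texttt{s} can return a \texttt{W}-headed term, and conversely, which is why the induction is run on term size rather than on any naive ``\texttt{V}-before-\texttt{W}'' structural order. Once the cases are matched correctly, every remaining obligation collapses to algebra of the shape $2(2^{a+1}-1)+1=2^{a+2}-1$, which is routine.
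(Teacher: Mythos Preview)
Your proposal is correct and is essentially a careful fleshing-out of the paper's one-line proof (``By induction, using the arithmetic formulas defining the two functions''): you make explicit the successor-correctness and emulation lemmas that the paper's phrase ``arithmetic formulas'' glosses over, and your complete induction on $x$ for $n\circ t=\mathrm{id}_\N$ is exactly the induction the paper gestures at. The one small difference is that you leverage the already-proved bijectivity of $n$ to obtain $t\circ n=\mathrm{id}_\T$ for free, whereas the paper's wording suggests proving both directions by separate inductions; your shortcut is sound and slightly more economical.
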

\begin{proof}
By induction, using the arithmetic formulas defining the two functions.
\end{proof}

\section{Arithmetic operations} \label{arith}
We will now describe algorithms for
basic arithmetic operations that take advantage of
our number representation.

\subsection{A few low complexity operations}
Doubling a number {\tt db}
and reversing the {\tt db} operation ({\tt hf}) are
quite simple, once one remembers that the arithmetic
equivalent of function {\tt o} is $o(x)=2x+1$.
\begin{code}
db = s' . o
hf = o' . s 
\end{code}

Note that efficient implementations follow directly from
our number theoretic observations in section \ref{bijbin}. 

For instance, as a consequence
of proposition \ref{fastexp},
the operation
{\tt exp2}, computing an exponent of $2$ ,
has the following simple definition in terms of
{\tt s} and {\tt s'}.
\begin{code}
exp2 E = V E []
exp2 x = s (V (s' x) [])
\end{code}

\begin{prop}
Assuming {\tt s,s'} constant time, {\tt db,hf} and {\tt exp2} are also constant time.
\end{prop}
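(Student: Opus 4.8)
The plan is to reduce the claim to two facts already in hand: (i) the preceding proposition, which says that under the standing assumption the single-step emulators {\tt o}, {\tt o'}, {\tt i}, {\tt i'} are constant time, and (ii) the trivial observation that a composition of a \emph{fixed} number of constant-time operations, interspersed with $O(1)$ constructor allocations, is again constant time — the point being that none of {\tt db}, {\tt hf}, {\tt exp2} is recursive, so there is nothing more to bound than a bounded sum of bounded costs.

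First I would handle {\tt db = s' . o}. On input $t$, evaluating {\tt o}~$t$ costs a constant $c_o$ by the earlier proposition and yields a term of $\T$, to which {\tt s'} is then applied at cost at most $c_{s'}$ by hypothesis; hence {\tt db} runs in at most $c_o + c_{s'}$ steps. The case {\tt hf = o' . s} is symmetric: {\tt s}~$t$ costs $c_s$, and on the intended domain of {\tt hf} — the even numbers, for which {\tt s}~$t$ is odd, i.e.\ of the form {\tt V}~$\ldots$, so that the partial destructor {\tt o'} applies — the call to {\tt o'} costs a constant; thus {\tt hf} is bounded by $c_s + c_{o'}$. For {\tt exp2}: the base case {\tt exp2 E} returns the fixed term {\tt V E []} after a constant number of constructor steps; in the general case, for $x \neq {\tt E}$, one computes {\tt s'}~$x$ in constant time, wraps it in a single {\tt V}-node with empty tail (one constructor application, giving the legitimate term {\tt V (s' x) []}), and applies {\tt s} once more in constant time, for a total that is again a fixed sum of constants.

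The only place that calls for care — and the nearest thing to an obstacle — is pinning down the cost model for the intermediate terms passed between the composed primitives: one must know that wrapping an existing subterm in one {\tt V} or {\tt W} constructor is an $O(1)$ operation and does not copy or re-normalise the subtree, and, dually, that the constant-time guarantees for {\tt s}, {\tt s'}, {\tt o}, {\tt o'} hold uniformly on all terms of $\T$ (so in particular on these intermediates) rather than only on ``typical'' inputs. Under the term/pointer representation implicit throughout the paper both are immediate, and the proposition then follows simply by adding up the constantly many constant contributions in each definition.
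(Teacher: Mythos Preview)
Your proposal is correct and follows essentially the same approach as the paper, which simply observes that only a fixed number (two) of calls to {\tt s}, {\tt s'}, {\tt o}, {\tt o'} are made in each definition. Your version is more explicit about the cost model and the handling of constructor applications, but the underlying argument is identical.
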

\begin{proof}
It follows by observing that only 2 calls to {\tt s,s',o,o'} are
made. 
\end{proof}

\subsection{Simple addition and subtraction algorithms}

A simple addition algorithm ({\tt add}) proceeds by recursing through 
our emulated bijective base-2
operations {\tt o} and {\tt i}. 
\begin{code}
simpleAdd E y = y
simpleAdd x E = x
simpleAdd x y | o_ x && o_ y = 
  i (simpleAdd (o' x) (o' y))
simpleAdd x y | o_ x && i_ y = 
  o (s (simpleAdd (o' x) (i' y)))
simpleAdd x y | i_ x && o_ y = 
  o (s (simpleAdd (i' x) (o' y)))
simpleAdd x y | i_ x && i_ y = 
  i (s (simpleAdd (i' x) (i' y)))
\end{code}
% i = s.o -> 2x+2
% o.s = s.s.o ->2x +3
% i.s = s.s.s.o -> 2x+4
Subtraction is similar.
\begin{code}
simpleSub x E= x
simpleSub y x | o_ y && o_ x = 
  s' (o (simpleSub (o' y) (o' x))) 
simpleSub y x | o_ y && i_ x = 
  s' (s' (o (simpleSub (o' y) (i' x))))
simpleSub y x | i_ y && o_ x = 
  o (simpleSub (i' y) (o' x))  
simpleSub y x | i_ y && i_ x = 
  s' (o (simpleSub (i' y) (i' x))) 
\end{code}

The following holds:
\begin{prop}
Assuming that {\tt s,s} are constant time, the complexity of addition and subtraction
is proportional to the bitsize of
the smallest of their operands.
\end{prop}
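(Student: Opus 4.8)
The plan is to show that {\tt simpleAdd} and {\tt simpleSub} carry out a \emph{linear} recursion whose depth is governed by the shorter of the two operands' bijective base-2 expansions, while each level of the recursion costs only $O(1)$.

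For $m\in\N$, write $\ell(m)$ for the length of the unique sequence of $o$/$i$ applications representing $m$ (so $\ell(0)=0$). The first step is to use the fact, established in Section~\ref{emu}, that {\tt o'} and {\tt i'} each ``un-apply'' a single bijective base-2 digit: if a term $x$ satisfies {\tt o\_ x} (resp. {\tt i\_ x}), then {\tt o' x} (resp. {\tt i' x}) is well defined and represents a number whose $\ell$-value is exactly $\ell(n(x))-1$, while {\tt E} has $\ell$-value $0$. Inspecting the four recursive clauses of {\tt simpleAdd}, every recursive call replaces $(x,y)$ by $(d_1\,x,\;d_2\,y)$, where each $d_j\in\{\text{\tt o'},\text{\tt i'}\}$ is whichever destructor the recognizer selects; hence each call decreases $\min(\ell(n(x)),\ell(n(y)))$ by exactly one, and the recursion bottoms out at a base clause precisely when one operand first equals {\tt E}, i.e. after exactly $\min(\ell(n(x)),\ell(n(y)))$ recursive calls. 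The identical argument applies to {\tt simpleSub y x}, where the subtrahend $x$ (the smaller number, when the difference is defined) is the operand that gets exhausted.

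Next I would relate $\ell$ to bitsize. A count of $\{o,i\}$-strings shows that the numbers $m$ with $\ell(m)\le k$ are exactly $0,1,\dots,2^{k+1}-2$, so $\ell(m)$ lies within an additive constant of the bitsize of $m$; combining this with the previous paragraph, the recursion depth of {\tt simpleAdd x y} and of {\tt simpleSub y x} is $\Theta$ of the bitsize of the smaller operand. It then remains to bound the per-level cost: besides its single recursive call, each clause performs only a bounded number of applications of {\tt o,o',i,i'} and of {\tt s,s'}, each of which is $O(1)$ by the constant-time proposition of Section~\ref{emu} under the standing assumption that {\tt s,s'} are $O(1)$. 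Crucially, the extra {\tt s}/{\tt s'} wrappers are applied to the \emph{result} returned by the recursive call, never to the arguments passed downward, so they cannot enlarge the operands of deeper calls and the $O(1)$ per-level bound is uniform. Multiplying depth by per-level cost gives an $O(\text{bitsize of smaller operand})$ bound; the matching lower bound is immediate, since each level reads at least one digit of that operand.

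The step I expect to be the main obstacle is the first one: verifying, across all the {\tt o\_}/{\tt i\_} case combinations, that the selected destructor really does decrease $\ell$ by exactly one, and that the {\tt o}/{\tt i} constructors (and interleaved {\tt s} calls) appearing in the recursive clauses do not disturb the digit-count accounting --- this is where one must keep the structural definitions of {\tt o',i'} and the arithmetic semantics of equation~(\ref{deft}) simultaneously in view, and where a careless off-by-one would give the wrong depth. Everything afterwards is a routine composition of already-established constant-time bounds.
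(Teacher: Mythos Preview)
Your proposal is correct and follows essentially the same argument as the paper's (very terse) proof: each recursive call of {\tt simpleAdd}/{\tt simpleSub} applies exactly one destructor {\tt o'} or {\tt i'} to each operand, so the base case is reached after $\min(\text{bitsize}(x),\text{bitsize}(y))$ calls, each of constant cost. Note that your auxiliary length $\ell$ is, by the paper's own definition of {\tt bitsize} (``the number of applications of $o$ and $i$''), the same quantity, so your intermediate step relating $\ell$ to bitsize is immediate rather than a genuine extra argument.
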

\begin{proof}
The algorithms advance through both 
their operands with one {\tt o',i'} operation at each call.
Therefore the case when one operand is {\tt E} is reached
as soon as the shortest operand is processed.
\end{proof}

Note that these simple algorithms do not take advantage of
possibly large blocks of $o$ and $i$ operations that
could significantly lower complexity on large numbers
with such a ``regular'' structure.

We will derive next versions of these algorithms {\em favoring
terms with large contiguous blocks of $o^n$ and $i$ applications},
on which they will lower complexity to depend on the {\em number of
blocks rather than the total number of $o$ and $i$ applications
forming the blocks}.

Given the recursive, self-similar structure of our trees, as
the algorithms mimic the data structures they operate on,
we will have to work with a chain of {\em mutually recursive}
functions. As our focus is to take advantage of large
contiguous blocks of $o^n$ and $i^m$ applications, 
the algorithms are in uncharted territory and as a result
somewhat more intricate than their traditional counterparts.

\subsection{Reduced complexity addition and subtraction}

We derive more efficient addition and subtraction
operations similar
to {\tt s} and {\tt s'} that {\em work on one
run-length encoded block at a time}, rather than by
individual {\tt o} and {\tt i} steps.

We  first
define the functions {\tt otimes} corresponding to $o^n(k)$ and {\tt itimes} corresponding to $i^n(k)$.
\begin{code}
otimes E y = y
otimes n E = V (s' n) []
otimes n (V y ys) = V (add n y) ys
otimes n (W y ys) = V (s' n) (y:ys)
\end{code}
\begin{code}
itimes E y = y
itimes n E = W (s' n) []
itimes n (W y ys) = W (add n y) ys
itimes n (V y ys) = W (s' n) (y:ys)    
\end{code}
They are part of a chain of mutually recursive functions as
they are already referring to the
{\tt add} function, to be implemented later.
Note also that instead of naively iterating, they implement a
more efficient algorithm, working
 ``one block at a time''. When detecting that its argument counts
a number of applications of {\tt o}, {\tt otimes} just increments that count.
On the other hand, when the last function applied was {\tt i}, {\tt otimes} simply
inserts a new count for {\tt o} operations. A similar process corresponds to 
{\tt itimes}. As a result, performance is (roughly) logarithmic rather than linear
in terms of the bitsize of argument {\tt n}. We will also use this property for
implementing a low complexity multiplication by exponent of 2 operation.

We will state a number of arithmetic identities on $\N$
involving iterated applications of $o$ and $i$.
\begin{prop} \label{addeqs}
The following hold:
\begin{equation} \label{oplus}
o^k(x) + o^k(y) = i^k(x+y) 
\end{equation}
\begin{equation} \label{oiplus}
o^k(x) + i^k(y) = i^k(x)+o^k(y) = i^k(x+y+1)-1 
\end{equation}
\begin{equation} \label{iplus}
i^k(x) + i^k(y) = i^k(x+y+2)-2 
\end{equation}
\end{prop}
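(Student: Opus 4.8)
The plan is to prove each of the three identities in Proposition~\ref{addeqs} directly from the closed forms for iterated applications already established in equations \ref{onk} and \ref{ink}, namely $o^n(k)=2^n(k+1)-1$ and $i^n(k)=2^n(k+2)-2$. Since these give us $o^k$ and $i^k$ as explicit affine functions of their argument with multiplier $2^k$, each identity reduces to an elementary algebraic manipulation, and no induction on $k$ is needed once those formulas are in hand.

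For equation \ref{oplus}, I would substitute: $o^k(x)+o^k(y) = (2^k(x+1)-1) + (2^k(y+1)-1) = 2^k(x+y+2)-2$, and observe that this is exactly $i^k(x+y)$ by \ref{ink}. For equation \ref{iplus}, similarly $i^k(x)+i^k(y) = (2^k(x+2)-2)+(2^k(y+2)-2) = 2^k(x+y+4)-4 = 2^k((x+y+2)+2)-2 = i^k(x+y+2)$, and since $i^k(x+y+2) = 2^k(x+y+4)-2$ we get $i^k(x+y+2)-2 = 2^k(x+y+4)-4$, matching. For equation \ref{oiplus}, the mixed case: $o^k(x)+i^k(y) = (2^k(x+1)-1)+(2^k(y+2)-2) = 2^k(x+y+3)-3$; meanwhile $i^k(x+y+1)-1 = (2^k(x+y+3)-2)-1 = 2^k(x+y+3)-3$, so the two agree. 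The symmetry $o^k(x)+i^k(y) = i^k(x)+o^k(y)$ in that line is immediate since swapping $x$ and $y$ with the roles of $o$ and $i$ leaves the sum $2^k(x+y+3)-3$ unchanged — or one can note it follows from $i = s\circ o$ together with eq.~\ref{isucco}.

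The only thing to be careful about is the domain restrictions: equations \ref{onk} and \ref{ink} as stated hold for all $k$ (they are the unconditional versions derived from the conditional \ref{on} and \ref{in}), so the substitutions are valid without side conditions, and the resulting identities on $\N$ hold for all $x,y$. I do not anticipate a real obstacle here; the main ``work'' is simply bookkeeping of the additive constants ($-1$ versus $-2$ in the two closed forms), and the cleanest presentation is to just expand both sides of each equation to the normal form $2^k(x+y+c)-d$ and check that the constants match. If one preferred an induction-free-of-closed-forms argument, one could alternatively prove the $k=1$ cases $o(x)+o(y)=i(x+y)$, $o(x)+i(y)=i(x+y+1)-1$, $i(x)+i(y)=i(x+y+2)-2$ by direct arithmetic and then lift to $k$ by a short induction using $o^{k+1}=o\circ o^k$, but the closed-form route is shorter and is what I would write up.
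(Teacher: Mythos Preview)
Your proposal is correct and takes essentially the same approach as the paper: the paper's proof simply says to substitute the $2^k$-based equivalents of $o^k$ and $i^k$ from (\ref{onk}) and (\ref{ink}) and observe that the same reduced forms appear on both sides. You have carried out exactly that substitution, with the algebra made explicit.
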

\begin{proof}
By (\ref{onk}) and (\ref{ink}), we substitute the $2^k$-based equivalents of $o^k$ and $i^k$,
then observe that the same reduced forms appear on both sides.
\end{proof}

The corresponding Haskell code is:
\begin{code}
oplus k x y =  itimes k (add x y) 
   
oiplus k x y = s' (itimes k (s (add x y)))    
   
iplus k x y = s' (s' (itimes k (s (s (add x y)))))
\end{code}
Note the use of {\tt add} that we will define later as
part of a chain of mutually recursive function calls,
that together will provide an implementation of
the intuitively simple idea: {\em they work on one run-length 
encoded block at a time}. {\em ``Efficiency'', in what follows,
will be, therefore, conditional to numbers having comparatively 
few such blocks}.

The corresponding identities for subtraction are:
\begin{prop} \label{subeqs}
\begin{equation}\label{ominus}
x > y ~\Rightarrow~ o^k(x) - o^k(y) = o^k(x-y-1)+1 
\end{equation}
\begin{equation}\label{oiminus}
x>y+1 ~\Rightarrow~ o^k(x) - i^k(y) = o^k(x-y-2)+2
\end{equation}
\begin{equation}\label{iominus}
x \geq y ~\Rightarrow~ i^k(x) - o^k(y) = o^k(x-y)
\end{equation}
\begin{equation}\label{iminus}
x > y ~\Rightarrow~ i^k(x) - i^k(y) = o^k(x-y-1)+1 
\end{equation}
\end{prop}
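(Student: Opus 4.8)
The plan is to follow verbatim the strategy used for Proposition \ref{addeqs}: replace every occurrence of $o^k$ and $i^k$ by its power-of-two closed form — using $o^k(m)=2^k(m+1)-1$ from (\ref{onk}) and $i^k(m)=2^k(m+2)-2$ from (\ref{ink}) — and then observe that both sides reduce to the same expression of the shape $2^k\cdot(\text{linear in }x,y)\,+\,(\text{constant})$.

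For instance, in (\ref{ominus}) the left side becomes $(2^k(x+1)-1)-(2^k(y+1)-1)=2^k(x-y)$ while the right side becomes $\bigl(2^k((x-y-1)+1)-1\bigr)+1=2^k(x-y)$; the remaining three identities are handled identically. A quick tally: the left sides of (\ref{oiminus}), (\ref{iominus}), (\ref{iminus}) simplify to $2^k(x-y-1)+1$, $2^k(x-y+1)-1$, and $2^k(x-y)$ respectively, which are exactly the stated right sides once $o^k$ is expanded there as well.

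The one point that genuinely needs care is the role of the hypotheses, which differ across the four cases — $x>y$ for (\ref{ominus}) and (\ref{iminus}), $x>y+1$ for (\ref{oiminus}), and $x\ge y$ for (\ref{iominus}). Each of these is precisely the condition under which the subtraction on the left stays inside $\N$ and, simultaneously, the argument handed to $o^k$ on the right (respectively $x-y-1$, $x-y-2$, $x-y$) is a genuine natural number; so in each case I would first record this bound and only then invoke (\ref{onk}) and (\ref{ink}), which are meaningful only for nonnegative arguments. I do not anticipate a real obstacle: the main hazard is a bookkeeping slip in pairing each identity with the exact strength of its hypothesis, and as an independent sanity check one could re-derive all four by induction on $k$ in the style of Proposition \ref{fastexp}, using (\ref{isucco}) to push a single $o$ or $i$ through the recursion — but the power-of-two substitution is shorter and is the route already taken for the companion addition identities, so that is the one I would follow.
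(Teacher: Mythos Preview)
Your proposal is correct and follows essentially the same route as the paper: substitute the closed forms $o^k(m)=2^k(m+1)-1$ and $i^k(m)=2^k(m+2)-2$ from (\ref{onk}) and (\ref{ink}) on both sides and check that the reduced forms agree, with the hypotheses guaranteeing that all subtractions stay in $\N$. Your write-up is in fact more explicit than the paper's, which merely states the substitution strategy and notes that the special cases ensure subtraction is defined.
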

\begin{proof}
By (\ref{onk}) and (\ref{ink}), we substitute the $2^k$-based equivalents of $o^k$ and $i^k$, and observe that the same reduced forms appear on both sides. Note that special cases
are handled separately to ensure that subtraction is defined.
\end{proof}
The Haskell code, also covering the special cases, is:
\begin{code}
ominus _ x y | x == y = E
ominus k x y = s (otimes k (s' (sub x y))) 

iminus _ x y | x == y = E   
iminus k x y =  s (otimes k (s' (sub x y)))

oiminus k x y | x==s y = s E
oiminus k x y | x == s (s y) = s (exp2 k)
oiminus k x y =  s (s (otimes k (s' (s' (sub x y)))))   

iominus k x y = otimes k (sub x y)
\end{code}
Note the reference to {\tt sub}, to be defined later, which
is also part of the mutually recursive chain of operations.

The next two functions extract the iterated applications of
$o^n$ and respectively $i^n$ from {\tt V} and {\tt W} terms:
\begin{code}
osplit (V x []) = (x,E )
osplit (V x (y:xs)) = (x,W y xs)

isplit (W x []) = (x,E )
isplit (W x (y:xs)) = (x,V y xs)
\end{code}
We are now ready for defining addition. The base cases are the same as for {\tt simpleAdd}:
\begin{code}
add E y = y
add x E = x
\end{code}
In the case when both terms represent odd numbers, we apply the identity (\ref{oplus}),
after extracting the iterated applications of $o$ as {\tt a} and {\tt b} with the function {\tt osplit}.
\begin{code}
add x y |o_ x && o_ y = f (cmp a b) where
  (a,as) = osplit x
  (b,bs) = osplit y
  f EQ = oplus (s a) as bs
  f GT = oplus (s b) (otimes (sub a b) as) bs
  f LT = oplus (s a) as (otimes (sub b a) bs)
\end{code}
In the case when the first term is odd and the second even, we apply the identity (\ref{oiplus}),
after extracting the iterated application of $o$ and $i$ as {\tt a} and {\tt b}.
\begin{code}  
add x y |o_ x && i_ y = f (cmp a b) where
  (a,as) = osplit x
  (b,bs) = isplit y
  f EQ = oiplus (s a) as bs
  f GT = oiplus (s b) (otimes (sub a b) as) bs
  f LT = oiplus (s a) as (itimes (sub b a) bs)  
\end{code}
In the case when the first term is even and the second odd, we apply the identity (\ref{oiplus}),
after extracting the iterated applications of $i$ and $o$ as, respectively, {\tt a} and {\tt b}.
\begin{code}    
add x y |i_ x && o_ y = f (cmp a b) where
  (a,as) = isplit x
  (b,bs) = osplit y
  f EQ = oiplus (s a) as bs
  f GT = oiplus (s b) (itimes (sub a b) as) bs
  f LT = oiplus (s a) as (otimes (sub b a) bs)    
\end{code}
In the case when both terms represent even numbers, we apply the identity (\ref{iplus}),
after extracting the iterated application of $i$ as {\tt a} and {\tt b}.
\begin{code}      
add x y |i_ x && i_ y = f (cmp a b) where
  (a,as) = isplit x
  (b,bs) = isplit y
  f EQ = iplus (s a) as bs
  f GT = iplus (s b) (itimes (sub a b) as) bs
  f LT = iplus (s a) as (itimes (sub b a) bs)  
\end{code}
Note the presence of the comparison operation {\tt cmp}, to be defined later, also
part of our chain of mutually recursive operations.
Note also the local function {\tt f} that in each case ensures that a block of the same
size is extracted, depending on which of the two operands {\tt a} or {\tt b} is larger.
The code for the subtraction function {\tt sub} is similar:
\begin{code}
sub x E = x
sub x y | o_ x && o_ y = f (cmp a b) where
  (a,as) = osplit x
  (b,bs) = osplit y
  f EQ = ominus (s a) as bs
  f GT = ominus (s b) (otimes (sub a b) as) bs
  f LT = ominus (s a) as (otimes (sub b a) bs)
\end{code}
In the case when both terms represent odd numbers, we apply the identity (\ref{ominus}),
after extracting the iterated applications of $o$ as {\tt a} and {\tt b}.
For the other cases, we use, respectively, the identities \ref{oiminus}, \ref{iominus} and
\ref{iminus}:
\begin{code}  
sub x y |o_ x && i_ y = f (cmp a b) where
  (a,as) = osplit x
  (b,bs) = isplit y
  f EQ = oiminus (s a) as bs
  f GT = oiminus (s b) (otimes (sub a b) as) bs
  f LT = oiminus (s a) as (itimes (sub b a) bs)  
sub x y |i_ x && o_ y = f (cmp a b) where
  (a,as) = isplit x
  (b,bs) = osplit y
  f EQ = iominus (s a) as bs
  f GT = iominus (s b) (itimes (sub a b) as) bs
  f _ = iominus (s a) as (otimes (sub b a) bs)      
sub x y |i_ x && i_ y = f (cmp a b) where
  (a,as) = isplit x
  (b,bs) = isplit y
  f EQ = iminus (s a) as bs
  f GT = iminus (s b) (itimes (sub a b) as) bs
  f LT = iminus (s a) as (itimes (sub b a) bs)  
\end{code}

\subsection{Defining a total order: comparison}
The comparison operation {\tt cmp}
provides a total order (isomorphic to that on $\N$) on our type $\T$.
It relies on {\tt bitsize} computing the number of applications of $o$ and $i$ 
constructing a term in $\T$.
It is part of our mutually recursive functions, to be defined later.

We first observe that only terms of the same bitsize need detailed comparison,
otherwise the relation between their bitsizes is enough, {\em recursively}.
More precisely, the following holds:
\begin{prop} \label{bitineq}
Let {\tt bitsize} count the number of applications of $o$ and $i$ operations
on a bijective base-2 number. 
Then {\tt bitsize}$(x) <${\tt bitsize}$(y) \Rightarrow x<y$.
\end{prop}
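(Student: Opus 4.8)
The plan is to bound the magnitude of a bijective base-2 number purely in terms of its {\tt bitsize}, and then note that the value-intervals belonging to distinct bitsizes are disjoint and correctly ordered. Concretely, I would first prove that a number $v$ obtained by exactly $k$ applications of $o$ and $i$ to $\epsilon$ satisfies
\[
2^{k} - 1 \;\le\; v \;\le\; 2^{k+1} - 2 .
\]
Granting this, assume {\tt bitsize}$(x) = j$ and {\tt bitsize}$(y) = k$ with $j < k$; then $j + 1 \le k$, so
\[
x \;\le\; 2^{j+1} - 2 \;\le\; 2^{k} - 2 \;<\; 2^{k} - 1 \;\le\; y ,
\]
and hence $x < y$, which is exactly the claim.

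The two-sided bound I would establish by induction on $k$, using that $o(x) = 2x + 1$ and $i(x) = 2x + 2$ are both strictly increasing and satisfy $o(x) \le i(x)$. The base case $k = 0$ is immediate: the unique bitsize-$0$ number is $0$, and $[2^{0} - 1,\, 2^{1} - 2] = [0, 0]$. For the inductive step, a bitsize-$(k{+}1)$ number is $o(v)$ or $i(v)$ with $v$ of bitsize $k$; the induction hypothesis plus monotonicity give $o(v) \ge o(2^{k} - 1) = 2^{k+1} - 1$ and $i(v) \le i(2^{k+1} - 2) = 2^{k+2} - 2$, which are precisely the endpoints required for bitsize $k{+}1$. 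Equivalently, since substituting $o$ for $i$ anywhere in a composition of these monotone maps can only decrease the outcome (and $i$ for $o$ only increase it), the extreme bitsize-$k$ values are $o^{k}(0)$ and $i^{k}(0)$, which by the identities $o^{n}(0) = 2^{n} - 1$ and $i^{n}(0) = 2^{n+1} - 2$ already recorded in Section~\ref{bijbin} equal $2^{k} - 1$ and $2^{k+1} - 2$.

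There is no real obstacle here; the only point that needs care is the bookkeeping linking {\tt bitsize} on a term of type $\T$ to the length of the $\{o, i\}$-sequence denoted by that term under the bijection $n : \T \to \N$ --- i.e. that each run-length counter $n(x) + 1$ inside a {\tt V} or {\tt W} node contributes exactly that many digit applications. This is a routine structural induction, after which the numeric estimate above finishes the proof; one just has to keep the off-by-one between ``exponent of $2$'' and ``number of digits'' straight throughout.
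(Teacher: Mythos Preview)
Your argument is correct. The inductive two-sided bound $2^{k}-1\le v\le 2^{k+1}-2$ for a bitsize-$k$ number is exactly what is needed, and your chaining of the intervals is clean; the appeal to $o^{n}(0)=2^{n}-1$ and $i^{n}(0)=2^{n+1}-2$ from Section~\ref{bijbin} to identify the endpoints is also apt.

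The paper, however, does not go through these explicit bounds. Its proof is a single sentence: it observes that, as one enumerates the natural numbers $0,1,2,\dots$, the bijective base-2 representations are produced in lexicographic (shortlex) order, so \texttt{bitsize} is a non-decreasing function of the value; the implication then follows by contraposition. So the paper argues monotonicity of \texttt{bitsize} directly, whereas you partition $\N$ into the value-intervals $[2^{k}-1,\,2^{k+1}-2]$ and compare endpoints. Your route is longer but more explicit and yields the sharp interval bounds as a by-product; the paper's route is shorter but leaves the reader to unpack what ``lexicographic enumeration'' means here. The final paragraph of your proposal about reconciling \texttt{bitsize} on $\T$ with digit-count on bijective base-2 strings is not needed for the proposition as stated (which is phrased purely at the level of bijective base-2 numbers), though it is of course needed where the proposition is applied to tree terms.
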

\begin{proof}
Observe that their lexicographic enumeration ensures that the
bitsize of bijective base-2 numbers is a non-decreasing function.
\end{proof}
\begin{code}
cmp E E = EQ
cmp E _ = LT
cmp _ E = GT
cmp x y | x' /= y'  = cmp x' y' where
  x' = bitsize x
  y' = bitsize y
cmp x y = 
  compBigFirst (reversedDual x) (reversedDual y)
\end{code}
The function {\tt compBigFirst} compares two terms known to have the same
{\tt bitsize}. It works on reversed (big digit first) variants,
computed by {\tt reversedDual} and it takes  advantage of the
block structure using the following proposition:
\begin{prop}
Assuming two terms of the same bitsizes, the one starting with
$i$ is larger than one starting with $o$.
\end{prop}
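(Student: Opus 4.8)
The plan is to push the statement through the bijection $n:\T\to\N$ and reduce it to an elementary inequality between two bijective base-2 numbers having the same number of digits. Recall that {\tt bitsize} counts exactly the number of $o$ and $i$ applications forming a term, i.e.\ the length of its bijective base-2 digit string over $\{1,2\}$ (where an $o$ contributes a $1$ and an $i$ a $2$), and that in the big-digit-first layout produced by {\tt reversedDual} the topmost block records the \emph{most significant} digit. So ``starting with $i$'' means the leading digit is a $2$, ``starting with $o$'' means it is a $1$, and the hypothesis ``same {\tt bitsize}'' means the two digit strings have a common length $m$.

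The main step I would carry out is the observation --- the same one used in the proof of Proposition~\ref{bitineq} --- that on the set of bijective base-2 numbers of a fixed digit-length $m$ the value order coincides with the lexicographic order on length-$m$ strings over $\{1,2\}$ read most-significant-digit first, with $1<2$: at the most significant position $j$ at which two such strings differ, the surplus of at least $2^j$ contributed there dominates the total deficit $\le 2^{j}-1$ that all lower positions can incur. Applying this with our $x$ (leading digit $1$) and $y$ (leading digit $2$) of common length $m$, the comparison is already decided at the first digit in favour of $y$, hence $n(y)>n(x)$, which is the claim.

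To make the bound explicit and stay inside the algebraic idiom of Section~\ref{bijbin}, I would write $n(x)=\sum_{j=0}^{m-1} d_j 2^j$ and $n(y)=\sum_{j=0}^{m-1} e_j 2^j$ with all $d_j,e_j\in\{1,2\}$ (the positional form obtained by iterating (\ref{onk}) and (\ref{ink}) over the maximal $o$- and $i$-blocks), with $d_{m-1}=1$ and $e_{m-1}=2$ by hypothesis. Then $n(x)\le 2^{m-1}+2\sum_{j=0}^{m-2}2^{j}=3\cdot 2^{m-1}-2$ while $n(y)\ge 2\cdot 2^{m-1}+\sum_{j=0}^{m-2}2^{j}=3\cdot 2^{m-1}-1>n(x)$. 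Note that the lengths of the two leading blocks are irrelevant: only the single most significant digit enters the bound.

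I expect no genuine obstacle: the argument is essentially the monotonicity fact already established for Proposition~\ref{bitineq}. The only points needing a line of care are the bookkeeping identification of {\tt bitsize} with digit-length (immediate from its definition), the degenerate cases in which a term is a single pure block so that the ``rest'' is {\tt E} and the lower-digit sums above are empty (the inequalities $3\cdot2^{m-1}-2 < 3\cdot2^{m-1}-1$ still hold, and for $m=1$ this just reads $1<2$), and pinning down the precise semantics of {\tt reversedDual} so that ``starting with $i$/$o$'' is literally the most significant digit --- all routine.
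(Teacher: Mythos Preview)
Your argument is correct and follows the same idea the paper invokes: same-length bijective base-$2$ strings are ordered lexicographically (most significant digit first) with $o<i$, so a leading $i$ beats a leading $o$. The paper's proof is the single sentence ``big digit first numbers are lexicographically ordered with $o<i$''; your explicit bound $\max_{d_{m-1}=1}\sum d_j2^j=3\cdot2^{m-1}-2<3\cdot2^{m-1}-1=\min_{e_{m-1}=2}\sum e_j2^j$ makes that remark fully quantitative and in fact establishes the slightly stronger statement that \emph{every} $i$-led number of length $m$ exceeds \emph{every} $o$-led one, which is exactly what {\tt compBigFirst} needs.
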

\begin{proof}
Observe that ``big digit first'' numbers are lexicographically ordered with $o<i$.
\end{proof}

As a consequence, {\tt cmp} only recurses when {\em identical} blocks
head the sequence of blocks, otherwise it infers the {\tt LT} or {\tt GT}
relation. 
\begin{code}  
compBigFirst E E = EQ
compBigFirst x y | o_ x && o_ y = f (cmp a b) where
    (a,c) = osplit x
    (b,d) = osplit y
    f EQ = compBigFirst c d
    f LT = GT
    f GT = LT   
compBigFirst x y | i_ x && i_ y = f (cmp a b) where
    (a,c) = isplit x
    (b,d) = isplit y
    f EQ = compBigFirst c d
    f other = other
compBigFirst x y | o_ x && i_ y = LT
compBigFirst x y | i_ x && o_ y = GT
\end{code}
The function reversedDual reverses the order of application
of the $o$ and $i$ operations to a ``biggest digit
first'' order. For this, it only needs to reverse
the order of the alternative blocks of $o^k$ and $i^l$.
It uses the function {\tt len} to compute the number of these
blocks and infer that if odd, the last block is the same
as the first and otherwise it is its alternate.
\begin{code}
reversedDual E = E
reversedDual (V x xs) = f (len (y:ys)) where
  (y:ys) = reverse (x:xs)
  f l | o_ l = V y ys
  f l | i_ l = W y ys
reversedDual (W x xs) = f (len (y:ys)) where
  (y:ys) = reverse (x:xs)
  f l | o_ l = W y ys
  f l | i_ l = V y ys
\end{code}
\begin{code}  
len [] = E
len (_:xs) = s (len xs)  
\end{code}

%\begin{comment}
And based on {\tt cmp}, one can define the minimum {\tt min2},
maximum {\tt max2} 
the absolute value of the difference {\tt absdif}
functions as follows:
\begin{code}
min2 x y = if LT==cmp x y then x else y
max2 x y = if LT==cmp x y then y else x
absdif x y = if LT == cmp x y then sub y x else sub x y
\end{code}
%\end{comment}

\subsection{Computing {\tt dual} and {\tt bitsize}}
The function {\tt dual} flips {\tt o} and {\tt i} operations
for a natural number seen as written in bijective base 2. 
Note that  with our tree representation it
is constant time, as it simply
flips once the constructors {\tt V} and {\tt W}.
\begin{code}
dual E = E
dual (V x xs) = W x xs
dual (W x xs) = V x xs
\end{code}

The function {\tt bitsize} computes  the number of applications of the 
{\tt o} and {\tt i} operations. It works by summing up (using
Haskell's {\tt foldr}) the counts of {\tt o} and {\tt i} operations
composing a tree-represented natural number.
\begin{code}
bitsize E = E
bitsize (V x xs) = s (foldr add1 x xs)
bitsize (W x xs) = s (foldr add1 x xs)

add1 x y = s (add x y)
\end{code}
Note that {\tt bitsize} also provides an efficient implementation
of the integer $log_2$ operation {\tt ilog2}.
\begin{code}
ilog2 x = bitsize (s' x)
\end{code}

\subsection{Fast multiplication by an exponent of 2}

The function {\tt leftshiftBy} operation uses the fact
that repeated application of the {\tt o} operation ({\tt otimes}) ,  
provides an efficient implementation of 
multiplication with an exponent of 2.
\begin{code}
leftshiftBy _ E = E
leftshiftBy n k = s (otimes n (s' k))
\end{code}

The following holds:
\begin{prop}
Assuming {\tt s,s'} constant time,
{\tt leftshiftBy} is (roughly) logarithmic in the bitsize of 
its arguments.
\end{prop}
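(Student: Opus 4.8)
The plan is to trace the constant number of recursive descents a single top-level call to \texttt{leftshiftBy} can trigger and charge each of them to \texttt{s}, \texttt{s'}, a cons, or one call to \texttt{add} on \emph{small} arguments. First I would unfold the definition: \texttt{leftshiftBy n k = s (otimes n (s' k))}, so modulo the two $O(1)$ calls to \texttt{s} and \texttt{s'} (by hypothesis) it suffices to bound \texttt{otimes n x} where \texttt{x = s' k} has bitsize within $O(1)$ of that of \texttt{k}. Inspecting the four clauses of \texttt{otimes}, three of them --- \texttt{otimes E y}, \texttt{otimes n E}, \texttt{otimes n (W y ys)} --- perform only $O(1)$ work (at most one \texttt{s'} and a cons), and the fourth, \texttt{otimes n (V y ys) = V (add n y) ys}, performs one call to \texttt{add n y} plus $O(1)$. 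Hence the whole cost of \texttt{leftshiftBy n k} reduces to the cost of a \emph{single} \texttt{add n y}, where \texttt{y} is the head counter of the leading run of \texttt{o}-applications of \texttt{x}.

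Next I would bound the sizes of \texttt{n} and \texttt{y}. The counter \texttt{y} encodes (one less than) the length $\ell$ of the leading \texttt{o}-run of \texttt{x}, and $\ell \le \mathtt{bitsize}(x) = \mathtt{bitsize}(k) + O(1)$; since \texttt{y} is a \emph{tree} whose \emph{value} is $\ell-1$, Proposition \ref{bitineq} (monotonicity of \texttt{bitsize} along the order on $\N$), together with the defining equation for $n:\T\to\N$, gives $\mathtt{bitsize}(\mathtt{y}) = O(\log \mathtt{bitsize}(k))$. The argument \texttt{n} is just read as is, so the relevant quantity is its own $\mathtt{bitsize}(\mathtt n)$, which is logarithmic in the shift amount it denotes. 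Then, using the complexity bound for addition (the \texttt{simpleAdd} proposition, or the no-worse behaviour of \texttt{add}), the cost of \texttt{add n y} is proportional to the bitsize of the \emph{smaller} of its two operands, hence $O(\min(\mathtt{bitsize}(\mathtt n),\, \log \mathtt{bitsize}(\mathtt k)))$. Combining, \texttt{leftshiftBy n k} runs in time $O(\mathtt{bitsize}(\mathtt n) + \log \mathtt{bitsize}(\mathtt k))$, which is $O(\log(\text{bitsize of the result } 2^{n}k))$ --- i.e. ``roughly logarithmic in the bitsize of its arguments,'' matching the statement (the unavoidable cost of scanning \texttt{n} itself is exactly what the hedge ``roughly'' absorbs on the side of \texttt{n}).

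The main obstacle is that \texttt{otimes}, \texttt{add}, \texttt{sub}, \texttt{cmp} and \texttt{leftshiftBy} form one mutually recursive clique, so charging ``one call to \texttt{add}'' and then invoking an \emph{independent} complexity bound for \texttt{add} is circular if taken literally; a fully rigorous treatment needs a simultaneous induction over the whole chain, with every operand measured by \texttt{bitsize} rather than by value, establishing jointly that each of these operations costs a polylogarithm of the bitsizes of its operands. The delicate point inside that induction is exactly that the nested counters decay like iterated logarithms --- each level's counter has bitsize logarithmic in the previous level's run length --- so that the apparently unbounded recursion inside \texttt{add} is in fact shallow and the top-level \texttt{add n y} really is cheap. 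I would therefore present the accounting above as the top-level argument, explicitly flagging the joint-induction bookkeeping and the standing idealization that \texttt{s}, \texttt{s'} are constant time (they are really iterated-logarithmic) as the two sources of the word ``roughly.''
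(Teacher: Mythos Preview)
Your proposal is correct and follows essentially the same approach as the paper: unfold \texttt{leftshiftBy} to \texttt{otimes}, observe that at most one call to \texttt{add} occurs, and note that this \texttt{add} acts on data whose bitsize is logarithmic in the bitsize of the operands. The paper's proof is literally the one-line observation ``at most one addition on data logarithmic in the bitsize of the operands is performed''; you have expanded this with a careful case split on the clauses of \texttt{otimes}, an explicit bound on the head counter \texttt{y}, and a correct accounting of how the bound relates to the bitsize of the result $2^{n}k$. Your flagging of the mutual-recursion circularity (that invoking an independent cost bound for \texttt{add} inside the analysis of \texttt{otimes} is, strictly speaking, a joint induction) is an honest refinement the paper does not make explicit.
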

\begin{proof}
it follows by observing that at most one addition on data
logarithmic in the bitsize of the operands is performed.
\end{proof}

\subsection{Fast division by an exponent of 2}
Division by an exponent of 2 (equivalent to the {\em rightshift operation}
is more intricate. 
It takes advantage of identities (\ref{onk}) and (\ref{ink})
in a way that is similar to {\tt add} and {\tt sub}.
First, the function {\tt toShift} transforms the outermost block
of $o^m$ or $i^m$ applications to to a multiplication
of the form $k2^m$. It also remembers if it had $o^m$ or $i^m$,
as the first component of the triplet it returns. Note that,
as a result, the operation is actually reversible.
\begin{code}
toShift x | o_ x = (o E,m,k) where
  (a,b) = osplit x
  m = s a
  k = s b
toShift x | i_ x = (i E,m,k) where
  (a,b) = isplit x
  m = s a
  k = s (s b)
\end{code}
Next the function {\tt rightshiftBy} goes over its argument {\tt k}
one block at a time, by comparing the size of the block and its argument
{\tt m} that is decremented after each block by the size of the block.
The local function {\tt f} handles the details, according to
the nature of the block ($o^m$ or $i^m$), and stops when the
argument is exhausted.
More precisely, based on the result {\tt EQ, LT, GT} of the comparison,
as well on the type of block (as recognized by {\tt o\_ p} and {\tt i\_ p}),
it applies back {\tt otimes} or {\tt itimes} when the block
is larger than the value of {\tt m}. Otherwise, it calls itself with
the value of {\tt m} reduced by the size to the block as its first argument.
\begin{code}
rightshiftBy _ E = E
rightshiftBy m k = f (cmp m a) where
  (p,a,b) = toShift k
  
  f EQ | o_ p = sub b p
  f EQ | i_ p = s (sub b p)
  f LT | o_ p = otimes (sub a m) (sub b p)
  f LT | i_ p = s (itimes (sub a m) (sub b p))
  f GT =  rightshiftBy (sub m a) b
\end{code}
The following example illustrates the fact that {\tt rightShiftBy} inverts
the result of{\tt leftshiftBy} on a very large number.
\begin{codex}
*HBin> s' (exp2 (t 100000))
V (V (W E [E]) [E,E,E,E,V E [],V (V E []) [],E]) []
*HBin> leftshiftBy (t 1000) it
W E [W (W E []) [V E [],V (V E []) []],W (W E [E]) 
    [V E [],E,E,V E [],V (V E []) [],E]]
*HBin> rightshiftBy (t 1000) it
V (V (W E [E]) [E,E,E,E,V E [],V (V E []) [],E]) []
\end{codex}

\subsection{Reduced complexity general multiplication}

Devising a similar optimization as for {\tt add} and {\tt sub} for multiplication
is actually easier.
\begin{prop}
The following holds:
\begin{equation} \label{muleq}
o^n(a)o^m(b)=o^{n+m}(ab+a+b)-o^n(a)-o^m(b)
\end{equation}
\end{prop}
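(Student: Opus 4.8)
The plan is to prove equation (\ref{muleq}) by the same substitution technique used for Propositions \ref{addeqs} and \ref{subeqs}: replace each iterated-function term by its closed $2^k$-based form from (\ref{onk}) and (\ref{ink}), then verify that both sides reduce to the same polynomial expression in $2^n$, $2^m$, $a$, $b$.

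First I would rewrite the left-hand side. By (\ref{onk}), $o^n(a) = 2^n(a+1)-1$ and $o^m(b) = 2^m(b+1)-1$, so
\begin{equation}
o^n(a)\,o^m(b) = \bigl(2^n(a+1)-1\bigr)\bigl(2^m(b+1)-1\bigr) = 2^{n+m}(a+1)(b+1) - 2^n(a+1) - 2^m(b+1) + 1. \notag
\end{equation}
Next I would expand the right-hand side. The term $o^{n+m}(ab+a+b)$ is, again by (\ref{onk}), equal to $2^{n+m}(ab+a+b+1)-1 = 2^{n+m}(a+1)(b+1) - 1$. Subtracting $o^n(a) = 2^n(a+1)-1$ and $o^m(b) = 2^m(b+1)-1$ gives
\begin{equation}
2^{n+m}(a+1)(b+1) - 1 - \bigl(2^n(a+1)-1\bigr) - \bigl(2^m(b+1)-1\bigr) = 2^{n+m}(a+1)(b+1) - 2^n(a+1) - 2^m(b+1) + 1. \notag
\end{equation}
The two expressions coincide, which establishes the identity. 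I would also note in passing that $ab+a+b = (a+1)(b+1)-1 \geq 0$, so the argument to $o^{n+m}$ on the right is a genuine natural number and the formula stays within $\N$; no case split on the relative sizes of the operands is needed here (unlike the subtraction identities), since $o^{n+m}(ab+a+b) \geq o^n(a) + o^m(b)$ follows directly from $2^{n+m}(a+1)(b+1) \geq 2^n(a+1) + 2^m(b+1)$ whenever $a,b \geq 0$ and $n,m \geq 1$.

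I do not expect a genuine obstacle: the proof is a one-line algebraic verification once the closed forms are substituted, exactly as the paper's own proof sketches for the neighboring propositions indicate. The only point requiring a small amount of care is bookkeeping with the $-1$ constants from the bijective encoding — making sure that the three separate $-1$ terms on the right (one from $o^{n+m}$ and the two subtracted $o$-terms, which contribute $+1$ each after the double negation) combine to the single $+1$ appearing after expanding the product on the left. This is the kind of sign/constant tracking that is routine but easy to slip on, so I would write the two expansions out explicitly rather than claiming the result by inspection.
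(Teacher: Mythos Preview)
Your proposal is correct and follows essentially the same approach as the paper: both substitute the closed form $o^k(x)=2^k(x+1)-1$ from (\ref{onk}) and reduce algebraically. The only cosmetic difference is that the paper transforms the left-hand side step by step into the right-hand side, whereas you expand both sides independently and compare; the content is identical.
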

\begin{proof}
By (\ref{onk}), we can expand and then reduce:
$o^n(a)o^m(b)=
(2^n(a+1)-1)(2^m(b+1)-1)=
2^{n+m}(a+1)(b+1)-(2^n(a+1)+2^m(b+1)+1=
2^{n+m}(a+1)(b+1)-1-(2^n(a+1)-1+2^m(b+1)-1+2)+2=
o^{n+m}(ab+a+b+ 1)-(o^n(a)+o^m(b))-2+2=
o^{n+m}(ab+a+b)-o^n(a)-o^m(b)$
\end{proof}

The corresponding Haskell code starts with the obvious base cases:
\begin{code}
mul _ E = E
mul E _ = E
\end{code}
When both terms represent odd numbers we apply the identity (\ref{muleq}):
\begin{code}
mul x  y | o_ x && o_ y = r2 where
  (n,a) = osplit x
  (m,b) = osplit y
  p1 = add (mul a b) (add a b)
  p2 = otimes (add (s n) (s m)) p1
  r1 = sub p2 x
  r2 = sub r1 y  
\end{code}
The other cases are reduced to the previous one by using the identity
$i=s.o$.
\begin{code}  
mul x y | o_ x && i_ y = add x (mul x (s' y))
mul x y | i_ x && o_ y = add y (mul (s' x) y)
mul x y | i_ x && i_ y = 
  s (add (add x' y') (mul x' y')) where 
    x'=s' x 
    y'=s' y
\end{code}
Note that when the operands are composed of large blocks of alternating
$o^n$ and $i^m$ applications, the algorithm is quite efficient as it
works (roughly) in time proportional to the number of blocks rather
than the number of digits.  
The following example illustrates a blend of 
arithmetic operations benefiting from complexity reductions
on giant tree-represented numbers:
\begin{codex}
*HBin> let term1 = sub (exp2 (exp2 
                    (t 12345))) (exp2 (t 6789))
*HBin> let term2 = add (exp2 (exp2 (t 123))) 
                    (exp2 (t 456789))
*HBin> ilog2 (ilog2 (mul term1 term2))
V E [E,E,W E [],V E [E],E]
*HBin> n it
12345
\end{codex}
This opens {\em a new world where arithmetic operations are
not limited by the size of their operands, but only by
their ``structural complexity''}. We will make this concept
more precise in section \ref{stru}.

\subsection{Power}

We first specialize our multiplication for a slightly
faster squaring operation, using the identity:

\begin{equation} \label{sq}
(o^n(a))^2=o^{2n}(a^2+2a)-2o^n(a)
\end{equation}

\begin{code}
square E = E
square x | o_ x = r where
  (n,a) = osplit x
  p1 = add (square a) (db a)
  p2 = otimes (i n) p1
  r = sub p2 (db x) 
square x| i_ x = s (add (db x') (square x')) where 
  x' = s' x

\end{code}

We can implement a simple but fairly efficient
`` power by squaring'' operation for $x^y$ as follows:

\begin{code}
pow _ E = V E []
pow x y | o_ y = mul x (pow (square x) (o' y))
pow x y | i_ y = mul x2 (pow x2 (i' y)) where 
  x2 = square x
\end{code}

It works well with fairly large numbers, by also
benefiting from efficiency of multiplication on terms
with large blocks of  $o^n$ and $o^m$ applications:
\begin{codex}
*HBin> n (bitsize (pow (t 2014) (t 100)))
1097
*HBin> pow (t 32) (t 10000000)
W E [W (W (V E []) []) [W E [E],
  V (V E []) [],E,E,E,W E [E],E]]
\end{codex}

\section{Structural complexity} \label{stru}

As a measure of structural complexity we define
the function {\tt tsize} that counts the nodes
of a tree of type $\T$ (except the root).
\begin{code}
tsize E = E
tsize (V x xs) = foldr add1 E (map tsize (x:xs))
tsize (W x xs) = foldr add1 E (map tsize (x:xs))
\end{code}
It corresponds to the function  $c:\T \to \N$
 defined as follows:
\begin{equation}
c(t)=
\begin{cases}
0&  \text{if $\text{\tt t}=~$0},\\
\sum_{y \in (\text{\tt x:xs})} {(1+ts(y))} &  \text{if t = \text{\tt V x xs}},\\
\sum_{y \in (\text{\tt x:xs})} {(1+ts(y))} &  \text{if t = \text{\tt W x xs}} .
\end{cases}
\end{equation}
The following holds:
\begin{prop} \label{bitcmp}
For all terms $t \in \T$, {\tt tsize t} $\leq$ {\tt bitsize t}.
\end{prop}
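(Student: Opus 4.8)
The plan is to read both \texttt{bitsize} and \texttt{tsize} through the bijection $n$ of equation~(\ref{deft}) and observe that, once the \texttt{foldr}/\texttt{add1} layers are unfolded, they are the \emph{same} aggregation over the immediate subterms — differing only in what is aggregated. Writing $b(t)=n(\texttt{bitsize}(t))$ and keeping the paper's notation $c$ for the arithmetic meaning of \texttt{tsize}, a one-line computation (each application of \texttt{add1} to terms $p,q$ contributes $1+n(p)+n(q)$) gives, for $t=\texttt{V x xs}$ and identically for $t=\texttt{W x xs}$,
\[
c(t)=(1+|\texttt{xs}|)+\sum_{z\in(\texttt{x:xs})} c(z),
\qquad
b(t)=(1+|\texttt{xs}|)+n(x)+\sum_{z\in\texttt{xs}} n(z)=(1+|\texttt{xs}|)+\sum_{z\in(\texttt{x:xs})} n(z),
\]
with $c(\texttt{E})=b(\texttt{E})=0$. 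Hence $b(t)-c(t)=\sum_{z}\bigl(n(z)-c(z)\bigr)$ over the immediate subterms, so the whole statement reduces to the single fact that a tree never denotes a number smaller than its own node count, i.e. $c(z)\le n(z)$ for every $z\in\T$.

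That fact rests on an auxiliary inequality, $b(t)\le n(t)$ for all $t\in\T$ — bitsize is bounded by the value. I would prove this without induction on $\T$ at all: $n(t)$ is obtained from $0$ by applying the $b(t)$ operations \texttt{o},\texttt{i} recorded by $t$, and since $o(v)=2v+1\ge v+1$ and $i(v)=2v+2\ge v+1$ for every $v\ge0$, each of those $b(t)$ steps raises the running value by at least $1$; therefore $n(t)\ge b(t)$. (A structural alternative is strong induction on the node count, using the formulas above and the elementary bound $2^{m+1}\ge m+2$; there one must pass through the auxiliary term $u$ of equation~(\ref{deft}), which has a strictly smaller node count but is not a literal subterm, hence strong rather than plain structural induction.)

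With the auxiliary lemma in hand, the proposition follows by structural induction on $t$ — using the strengthened induction principle for $\T$ that also recurses into the list arguments (equivalently, well-founded induction on \texttt{tsize}). The base case $t=\texttt{E}$ is $0\le0$. For $t=\texttt{V x xs}$ (the \texttt{W} case is verbatim identical), the induction hypothesis gives $c(z)\le b(z)$ for every immediate subterm $z\in(\texttt{x:xs})$, and the auxiliary lemma gives $b(z)\le n(z)$; chaining them yields $c(z)\le n(z)$, whence $c(t)=(1+|\texttt{xs}|)+\sum_z c(z)\le(1+|\texttt{xs}|)+\sum_z n(z)=b(t)$, which is the claim.

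There is no genuinely hard step. The point to get exactly right — and where a careless reading goes wrong — is the semantics of \texttt{bitsize}: it sums the \emph{values} $n(z)$ of the children, not their bitsizes, which is precisely what makes the inequality meaningful and what forces the detour through $b\le n$. That auxiliary inequality carries all the actual arithmetic; everything else is bookkeeping over \texttt{foldr} plus a reminder to use the induction principle that reaches inside the argument lists.
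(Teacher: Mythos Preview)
Your proof is correct and follows the same skeletal idea as the paper's two-line sketch---structural induction comparing the parallel \texttt{foldr}/\texttt{add1} shapes of the two definitions---but you have actually done the work the paper elides. In particular, you correctly identify that the inductive step does \emph{not} close with the hypothesis $c(z)\le b(z)$ alone (since \texttt{bitsize} aggregates the \emph{values} $n(z)$, not their bitsizes), and you supply the missing link $b(z)\le n(z)$ with a clean non-inductive argument via the $o,i$ interpretation. The paper's ``corresponding calls to {\tt tsize} return terms assumed smaller than those of {\tt bitsize}'' is really the claim $c(z)\le n(z)$ taken on faith; your write-up makes it a lemma. You also flag the need for the strengthened induction principle that descends into the list arguments, which the Coq-generated principle quoted in the paper does not provide---another point the paper's sketch passes over in silence.
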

\begin{proof}
By induction on the structure of $t$, by observing that the two
functions have similar definitions and corresponding calls to
{\tt tsize} return terms assumed smaller than those of
{\tt bitsize}.
\end{proof}

The following example illustrates their use:
\begin{codex}
*HBin> map (n.tsize.t) [0,100,1000,10000]
[0,6,8,10]
*HBin> map (n.bitsize.t) [0,100,1000,10000]
[0,6,9,13]

*HBin> map (n.tsize.t) [2^16,2^32,2^64,2^256]
[4,5,5,5]
*HBin> map (n.bitsize.t) [2^16,2^32,2^64,2^256]
[16,32,64,256]
\end{codex}

\begin{codeh}
repcomp m =  [bsizes,tsizes] where
  ts = map t [0..2^m-1]
  bsizes = map (n.bitsize) ts
  tsizes = map (n.tsize) ts
  
repdif x = n (sub (bitsize (t x)) (tsize (t x)))  
\end{codeh}

Figure \ref{tsizes} shows the reductions in structural complexity compared with bitsize for an initial interval of $\N$.

\FIG{tsizes}{Structural complexity (yellow line) bounded by bitsize (red line) from $0$ to $2^{10}-1$}{0.20}{tsizes.pdf}

After defining the function {\tt iterated} that applies f k times
\begin{code}  
iterated f E x = x
iterated f k x = f (iterated f (s' k) x) 
\end{code}
we can exhibit a best case
\begin{code}
bestCase k = s' (iterated exp2 k E)
\end{code}
and a worse case
\begin{code}
worseCase k = iterated (i.o) k E 
\end{code}
The following examples illustrate these functions:
\begin{codex}
*HBin> bestCase (t 5)
V (V (V (V E []) []) []) []
*HBin> n it
65535
*HBin> bestCase (t 5)
V (V (V (V E []) []) []) []
*HBin> n (bitsize (bestCase (t 5)))
16
*HBin> n (tsize (bestCase (t 5)))
4

*HBin> worseCase (t 5)
W E [E,E,E,E,E,E,E,E,E]
*HBin> n it
1364
*HBin> n (bitsize (worseCase (t 5)))
10
*HBin> n (tsize (worseCase (t 5)))
10
\end{codex}
The function {\tt bestCase } computes the iterated
exponent of 2 (tetration) and then applies the predecessor {\tt s'}
to it. A simple closed formula can also be found for {\tt worseCase}: 
\begin{prop}
The function {\tt worseCase k} computes the value in $\T$
corresponding to the value ${{4(4^{k} - 1)} \over 3} \in \N$.
\end{prop}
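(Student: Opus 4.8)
The plan is to reduce the claim about trees to an elementary fact about iterating an affine map on $\N$. By definition, $\mathtt{worseCase}~k = \mathtt{iterated}~(i \circ o)~k~\mathtt{E}$, i.e.\ the result of applying the composite tree operation $i \circ o$ exactly $k$ times (with $k$ the term denoting the natural number we also call $k$) to the term $\mathtt{E}$ representing $0$. Using the emulation results of Section \ref{emu} --- in particular that $n$ is a bijection with inverse $t$ and that the tree operations $o,i$ implement $x \mapsto 2x+1$ and $x \mapsto 2x+2$ on the represented values --- applying $i \circ o$ once to a term representing $m$ yields a term representing $g(m)$ where $g(m) = i(o(m)) = 2(2m+1)+2 = 4m+4$. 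A trivial induction on $k$ then shows $\mathtt{worseCase}~k$ represents $g^{k}(0)$, so it suffices to prove $g^{k}(0) = \tfrac{4(4^{k}-1)}{3}$.

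For this I would invoke the closed form for iterated affine functions stated earlier (the proposition giving $f^n(x) = x_0 + (x - x_0)a^n$ for $f(x) = ax+b$ with $a \neq 1$ and $x_0 = b/(1-a)$). Here $a = b = 4$, so $x_0 = 4/(1-4) = -4/3$, whence $g^{k}(0) = -\tfrac{4}{3} + \bigl(0 - (-\tfrac{4}{3})\bigr)4^{k} = -\tfrac{4}{3} + \tfrac{4}{3}4^{k} = \tfrac{4(4^{k}-1)}{3}$. Equivalently, one can bypass the auxiliary proposition and argue directly by induction: the base case $k=0$ gives $0 = \tfrac{4(4^0-1)}{3}$, and from $g^{k}(0) = \tfrac{4(4^{k}-1)}{3}$ one obtains $g^{k+1}(0) = 4\cdot\tfrac{4(4^{k}-1)}{3} + 4 = \tfrac{16\cdot 4^{k} - 4}{3} = \tfrac{4(4^{k+1}-1)}{3}$.

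There is no genuine difficulty here; the only point requiring a little care is the transfer step --- checking that the recursion scheme $\mathtt{iterated}~f~k~x = f(\mathtt{iterated}~f~(s'~k)~x)$ with base $\mathtt{iterated}~f~\mathtt{E}~x = x$ really computes $k$-fold composition under the correspondence $n$, which is a routine structural induction on $k$ using $n(o~t) = 2\,n(t)+1$ and $n(i~t) = 2\,n(t)+2$. One should also observe in passing that $4^{k} \equiv 1 \pmod 3$, so $\tfrac{4(4^{k}-1)}{3}$ is indeed a natural number and the statement is well-typed.
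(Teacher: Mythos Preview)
Your proposal is correct and follows essentially the same approach as the paper: the paper's proof is the one-line remark ``By induction or by applying the iterated function formula to $f(x)=i(o(x))=2(2x+1)+2=4(x+1)$,'' and you have spelled out both alternatives in detail, arriving at the same affine map $g(m)=4m+4$ and the same closed form.
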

\begin{proof}
By induction or by 
applying the iterated function formula to $f(x)=i(o(x)))=2(2x+1)+2=4(x+1)$.
\end{proof}
% o(i(x))=2(2x+2)+1=4x+5=> o.i=s.i.o => o.s.o = s.s.o.o => o.s = s.s.o
%For example while
%$4^{2^{64}} - 4 = $ {\tt W E [E,W E [E,V E [V (V E []) []]]]}
%can be represented compactly, its representation after dividing by 3
%will explode.

The average space-complexity of the representation
is related to the average length of the {\em integer partitions of
the bitsize of a number} \cite{part99}. Intuitively, the shorter the
partition in alternative blocks of $o$ and $i$ applications,
the more significant the compression is, but the exact study,
given the recursive application of run-length encoding,
is likely to be quite intricate.

Note also 
that our concept of structural complexity is only
a weak approximation
of Kolmogorov complexity \cite{vitanyi}.
For instance, the reader might
notice that  our worse case example
is computable by a program of relatively
small size. However, as {\tt bitsize}
is an upper limit to {\tt tsize}, we can
be sure that we are within constant
factors from the corresponding bitstring
computations, even on random data
of high Kolmogorov complexity.

Note also that an alternative concept of structural
complexity can be defined by considering the (vertices+edges) size of the DAG obtained
by folding together identical subtrees. We will use such DAGs in section \ref{theo}
to more compactly visualize large tree-represented numbers.

As section \ref{theo} will illustrate it, several interesting
number theoretical entities that hold current records in
various categories have very low structural complexities,
contrasting to gigantic bitsizes.

\section{Efficient representation of some important number-theoretical entities} \label{theo}

%\subsection{Compact representation of large Mersenne, Fermat and perfect numbers}

Let's first observe that Fermat, Mersenne and perfect numbers have all compact
expressions with our tree representation of type $\T$.

\begin{code}
fermat n = s (exp2 (exp2 n))

mersenne p = s' (exp2 p)

perfect p = s (V q [q]) where q = s' (s' p)
\end{code}

\begin{codex}
*HBin> mersenne 127
170141183460469231731687303715884105727
*HBin> mersenne (t 127)
V (W (V E [E]) []) []
\end{codex}
The largest known prime number, found by the GIMPS distributed computing 
project \cite{gimps} in January 2013 is the 
48-th Mersenne prime = $2^{57885161}-1$ (with possibly
smaller Mersenne primes below it).
It is defined in Haskell as follows:
\begin{code}
-- exponent of the 48-th known Mersenne prime
prime48 = 57885161
-- the actual Mersenne prime
mersenne48 = s' (exp2 (t prime48))
\end{code}
While it has a bit-size of 57885161, its 
compressed tree representation is rather small:
\begin{codex}
*HBin> mersenne48
V (W E [V E [],E,E,V (V E []) [],
   W E [E],E,E,V E [],V E [],W E [],E,E]) []
\end{codex}
%One the other hand, displaying it with a decimal or binary representation would take %millions of digits.
%And by folding replicated subtrees to obtain an equivalent DAG representation,
%one can save even more memory.
The equivalent DAG representation
of the 48-th Mersenne prime,
shown in Figure \ref{mersenne48}, has only 7 shared nodes 
and structural complexity 22. Note that the empty leaf node 
is marked with the letter {\tt T}.
\FIG{mersenne48}{Largest known prime number discovered in January 2013: the 48-th Mersenne prime, represented as a DAG}{0.30}{mersenne48.pdf}

It is interesting to note that similar compact representations can also
be derived for perfect numbers. For instance, the largest known perfect number, derived
from the largest known Mersenne prime as $2^{57885161-1}(2^{57885161}-1)$,
(involving only 8 shared nodes and structural complexity 43) is:
\begin{code}
perfect48 = perfect (t prime48)
\end{code}
%\begin{comment}
Fig. \ref{perfect48} shows the DAG representation of the largest known perfect number,
derived from Mersenne number 48.
\FIG{perfect48}{Largest known perfect number in January 2013}{0.25}{perfect48.pdf}
%\end{comment}
Similarly, the largest Fermat number that has been factored so far, $F_{11}=2^{2^{11}}+1$ is compactly represented as 
\begin{codex}
*HBin> fermat (t 11)
V E [E,V E [W E [V E []]]]
\end{codex}
with structural complexity 8.
By contrast, its (bijective base-2) binary representation consists of 2048 digits.

Some other very large primes that are not 
Mersenne numbers also have compact representations.

The generalized Fermat prime $27653*2^{9167433}+1$, (currently the 15-the largest prime number)  
computed as a tree is:
\begin{code}
genFermatPrime = s (leftshiftBy n k) where 
  n = t (9167433::Integer)
  k = t (27653::Integer)
\end{code}
\begin{codex}
*HBin> genFermatPrime
V E [E,W (W E []) [W E [],E,
  V E [],E,W E [],W E [E],E,E,W E []],
  E,E,E,W (V E []) [],V E [],E,E]     
\end{codex}
%\begin{comment}
Figure \ref{genFermatPrime} shows the DAG representation of this generalized Fermat prime
with 7 shared nodes and structural complexity 30.
\FIG{genFermatPrime}
{Generalized Fermat prime}
{0.30}{genFermatPrime.pdf}
%\end{comment}

The largest known Cullen prime $6679881*2^{6679881}+1$ computed as tree
(with 6 shared nodes and structural complexity 43) is:
\begin{code}
cullenPrime = s (leftshiftBy n n) where 
  n = t (6679881::Integer)
\end{code}
\begin{codex}
*HBin> cullenPrime
V E [E,W (W E []) [W E [],E,E,E,E,V E [],E,
  V (V E []) [],E,E,V E [],E],E,V E [],E,V E [],
  E,E,E,E,V E [],E,V (V E []) [],E,E,V E [],E]
\end{codex}
%\begin{comment}
Figure \ref{cullenPrime} shows the DAG representation of this Cullen prime.
\FIG{cullenPrime}
{largest known Cullen prime}
{0.20}{cullenPrime.pdf}
%\end{comment}
The largest known Woodall prime $3752948 * 2^{3752948} - 1$
computed as a tree (with 6 shared nodes and structural complexity 33) is:
\begin{code}
woodallPrime = s' (leftshiftBy n n) where 
  n = t (3752948::Integer)
\end{code}
\begin{codex}
*HBin> woodallPrime
V (V E [V E [],E,V E [E],V (V E []) [],
 E,E,E,V E [],V E []]) [E,E,V E [E],
 V (V E []) [],E,E,E,V E [],V E []]
\end{codex}
%\begin{comment}
Figure \ref{woodallPrime} shows the DAG representation of this Woodall prime.
\FIG{woodallPrime}
{Largest known Woodall prime}
{0.30}{woodallPrime.pdf}
%\end{comment}

The largest known Proth prime $19249*2^{13018586} + 1$
computed as a tree is:
\begin{code}
prothPrime = s (leftshiftBy n k) where 
  n = t (13018586::Integer)
  k = t (19249::Integer)
\end{code}
% largest non-Mersenne prime
\begin{codex}
*HBin> prothPrime
V E [E,V (W E []) [V E [],E,W E [],E,E,
  V E [],E,E,E,E,V E [],W E [],E],E,W E [],
  V E [],V E [],V E [],E,E,V E []]
\end{codex}
Figure \ref{prothPrime} shows the DAG representation of this Proth prime,
the largest non-Mersenne prime known by March 2013 with 5 shared nodes and structural
complexity 36.
\FIG{prothPrime}
{Largest known Proth prime}
{0.25}{prothPrime.pdf}

The largest known Sophie Germain prime $18543637900515*2^{666667}-1$
computed as a tree (with 6 shared nodes and structural complexity 56) is:
\begin{code}
sophieGermainPrime = s' (leftshiftBy n k) where 
  n = t (666667::Integer)
  k = t (18543637900515::Integer)
\end{code}
\begin{codex}
*HBin> sophieGermainPrime
V (W (V E []) [E,E,E,E,V (V E []) [],V E [],E,
   E,W E [],E,E]) [V E [],W E [],W E [],V E [],
   V E [],E,E,V E [],V E [],V E [],V (V E []) 
   [],E,V E [],V (V E []) [],V E [],E,W E [],E,
   V E [],V (V E []) []]
\end{codex}
%\begin{comment}
Figure \ref{sophieGermainePrime} shows the DAG representation of this prime.
\FIG{sophieGermainePrime}
{Largest known Sophie Germain prime}
{0.20}{sophieGermainePrime.pdf}
%\end{comment}
The largest known twin primes $3756801695685*2^{666669} \pm 1$ computed as a pair of
trees (with 7 shared nodes both and structural complexities of 54 and 56) are:
\begin{code}
twinPrimes = (s' m,s m) where 
  n = t (666669::Integer)
  k = t (3756801695685::Integer)
  m = leftshiftBy n k
\end{code}
\begin{codex}
*HBin> fst twinPrimes
V (W E [E,V E [],E,E,V (V E []) [],
   V E [],E,E,W E [],E,E]) 
   [E,E,E,W E [],W (V E []) [],V E [],E,V E [],
   E,E,E,E,V E [],E,E,
   V E [],V E [],E,E,E,E,E,E,E,V E [],E,E]
*HBin> snd twinPrimes
V E [E,W (V E []) [E,E,E,E,V (V E []) [],
  V E [],E,E,W E [],E,E],E,E,E,
  W E [],W (V E []) [],V E [],E,V E [],
  E,E,E,E,V E [],E,E,V E [],
  V E [],E,E,E,E,E,E,E,V E [],E,E]
\end{codex}
%\begin{comment}
Figures \ref{twinPrimes1} and \ref{twinPrimes2} show 
the DAG representation of these twin primes.
\FIG{twinPrimes1}
{Largest known twin prime 1}
{0.13}{twinPrimes1.pdf}
\FIG{twinPrimes2}
{Largest known twin prime 2}
{0.13}{twinPrimes2.pdf}
%\end{comment}
One can appreciate the succinctness of our representations,
given that all these numbers have hundreds of thousands 
or millions of decimal digits. An interesting challenge would
be to (re)focus on discovering primes with significantly larger 
structural complexity then the current record holders by bitsize.

More importantly, as the following examples illustrate it,
{\em computations} like addition, subtraction and multiplication of
such numbers are possible:
\begin{codex}
*HBin> sub genFermatPrime (t 2014)
V (V E []) [E,V E [],E,W E [E],V E [W E [E],
  W E [],E,W E [],W E [E],E,E,W E []],V E [],
  E,W (V E []) [],V E [],E,E]
*HBin> bitsize (sub prothPrime (t 1234567890))
W E [V E [],E,E,V (V E []) [],E,E,
  V E [],E,E,E,E,V E [],W E [],E]
*HBin> tsize (exp2 (exp2 mersenne48))
V E [E,E,E]
*HBin> tsize(leftshiftBy mersenne48 mersenne48)
V E [W E [],E]
*HBin> add (t 2) (fst twinPrimes) == 
           (snd twinPrimes)
True
*HBin> ilog2 (ilog2 (mul prothPrime cullenPrime))
W E [V E [],E]
*HBin> n it
24
\end{codex}

\section{Computing the Collatz/Syracuse sequence for huge numbers}
As an interesting application, that achieves something one cannot
do with ordinary Integers is to explore the behavior of interesting
conjectures in the new world of numbers limited not by their
sizes but by their structural complexity.
The Collatz conjecture states that the function
\begin{equation}
collatz(x)=
\begin{cases}
x \over 2  &  \text{if $x$ is even},\\
3x+1 &  \text{if $x$ is odd}.\\
\end{cases}
\end{equation}
reaches $1$ after a finite number of iterations.
An equivalent formulation, by grouping together all
the division by 2 steps, is the function:
\begin{equation}
collatz'(x)=
\begin{cases}
x \over 2^{\nu_2(x)}  &  \text{if $x$ is even},\\
3x+1 &  \text{if $x$ is odd}.\\
\end{cases}
\end{equation}
where $\nu_2(x)$ denotes the {\em dyadic valuation of x}, i.e., the largest
exponent of 2 that divides x. One step further, the
{\em syracuse function} is defined as the odd integer $k'$ such
that $n=3k+1=2^{\nu(n)}k'$. One more step further, by writing $k'=2m+1$
we get a function that associates $k \in \N$ to $m \in \N$.

The function {\tt tl} computes efficiently the equivalent
of 
\begin{equation}
tl(k) = {{{k \over 2^{\nu_2(k)}}-1} \over 2}
\end{equation}
\begin{code}      
tl n | o_ n = o' n 
tl n | i_ n = f xs where 
  V _ xs = s' n
  f [] = E
  f (y:ys) = s (i' (W y ys)) 
\end{code} 
\begin{codeh}
cons n y = s (otimes n (s' (o y)))

hd z | o_ z = E
hd z = s x where V x _ = s' z
\end{codeh}
Then our variant of the {\em syracuse function} corresponds to
\begin{equation}
syracuse(n) = tl(3n+2)
\end{equation}
which we can code efficiently as
\begin{code}
syracuse n = tl (add n (i n))
\end{code}
The function {\tt nsyr} computes the iterates of this function, 
until (possibly) stopping:
\begin{code}
nsyr E = [E]
nsyr n = n : nsyr (syracuse n)
\end{code}

It is easy to see that the Collatz conjecture is true if and only if {\tt nsyr}
terminates for all $n$, as illustrated by the following example:
\begin{codex}
*HBin> map n (nsyr (t 2014))
[2014,755,1133,1700,1275,1913,2870,1076,807,
   1211,1817,2726,1022,383,575,863,1295,1943,
   2915,4373,6560,4920,3690,86,32,24,18,3,5,
   8,6,2,0]
\end{codex}
The next examples will show that computations for
{\tt nsyr} can be efficiently carried out
for numbers that with traditional bitstring notations would easily
overflow even the memory of a computer using as transistors
all the atoms in the known universe.

The following examples illustrate this:
\begin{codex}
map (n.tsize) (take 1000 (nsyr mersenne48))
[22,22,24,26,27,28,...,1292,1313,1335,1353]
\end{codex}
As one can see, the structural complexity is growing
progressively, but that our tree-numbers have no
trouble with the computations.
\begin{codex}
*HBin> map (n.tsize) 
        (take 1000 (nsyr mersenne48))
[22,22,24,26,27,28,...,1292,1313,1335,1353]
\end{codex}
Moreover, we can keep going up with a tower of 3 exponents.
Interestingly, it results in a fairly small increase in structural 
complexity over the first 1000 terms.
\begin{codex}
*HBin> map (n.tsize) (take 1000 
        (nsyr (exp2 (exp2 (exp2 mersenne48)))))
[26,33,36,37,40,42,...,1313,1335,1358,1375]
\end{codex}

While we did not tried to wait out the 
termination of the execution for Mersenne number 48 we,
have computed {\tt nsyr} for the record holder from 1952,
which is still much larger than the values (up to $5 \times  2^{60}$) for which
the conjecture has been confirmed true.
Figure \ref{m15ts} shows the structural complexity curve
for the ``hailstone sequence'' associated
by the function {\tt nsyr}
to the 15-th Mersenne prime, $2^{1279}-1$
\FIG{m15ts}
{Structural complexity curve for {\tt nsyr} on $2^{1279}-1$ }
{0.25}{m15ts.pdf}

As an interesting fact,
possibly unknown so far, one
might notice the abrupt phase transition that, based on our
experiments, seem to characterize  the behavior of this function,
when starting with very large numbers of relatively small structural
complexity.

And finally something we are quite sure has never been computed before,
we can also start with a {\em tower of exponents 100 levels tall}:
\begin{codex}
*HBin> take 1000 (map(n.tsize)(nsyr (bestCase (t 100)))) 
[99,99,197,293,294,296,299,299,...,1569,1591,1614,1632]
\end{codex}

\section{Related work} \label{related}

We will briefly describe here some related work
that has inspired and facilitated this line of research
and will help to put our past contributions and planned
developments in context.

Several notations for very large numbers have been invented in the past. Examples
include Knuth's {\em arrow-up} notation \cite{knuthUp}
covering operations like the {\em tetration} (a notation for towers of exponents).
In contrast to our tree-based natural numbers,
such notations are not closed under
addition and multiplication, and consequently
they cannot be used as a replacement
for ordinary binary or decimal numbers.

The first instance of a hereditary number system, at our best knowledge,
occurs in the proof of Goodstein's theorem \cite{goodstein}, where
replacement of finite numbers on a tree's branches by the ordinal $\omega$
allows him to prove that a ``hailstone sequence'' visiting arbitrarily
large numbers eventually turns around and terminates.

Like our trees of type $\T$,
Conway's surreal numbers \cite{surreal} can also be seen as inductively
constructed trees. While our focus is on 
efficient large natural number
arithmetic and sparse set representations,
surreal numbers  model games,
transfinite ordinals and generalizations 
of real numbers.

Numeration systems on regular languages have been studied
recently, e.g. in \cite{Rigo2001469} and specific instances
of them are also known as 
bijective base-k numbers.  % \cite{wiki:bijbase}.
Arithmetic packages similar to 
our bijective base-2 view of arithmetic operations
are part of libraries of proof assistants
like Coq \cite{Coq:manual}.
\begin{comment} 
and the
corresponding regular 
language has been used as a basis
of decidable arithmetic systems like
{\tt {\tt (W)S2S} }
\cite{Rabin69Decidability}.
\end{comment}

Arithmetic computations based
on recursive data types like
the free magma of binary trees
(isomorphic to the 
context-free language of balanced parentheses)
are described in \cite{sac12},
where they are seen as G\"odel's {\tt System T} types,
as well as combinator application trees.
In \cite{ppdp10tarau}
a type class mechanism is used
to express computations on hereditarily
finite sets and hereditarily finite
functions.
\begin{comment}
An emulation of Peano and conventional binary arithmetic operations
in Prolog, is described in \cite{DBLP:conf/flops/KiselyovBFS08}.
Their approach is similar as far as a symbolic representation is used.
The key difference with our work is that our operations work on tree
structures, and as such, they are not based on previously known algorithms.
\end{comment}
In \cite{vu09} integer decision diagrams are introduced
providing a compressed representation for sparse
integers, sets and various other data types. 

%However likewise \cite{sac12} 
%and \cite{ppdp10tarau}, and by contrast to those
%proposed in this paper, they do not compress
%dense sets or numbers.

\begin{comment}
As a fresh look at the topic, we mention 
recent work in the context of
functional programming on
connecting heterogeneous data types through
bijective mappings and natural number 
encodings \cite{everybit,complambda,calc10tarau}.
\end{comment}

\section{Conclusion and future work} \label{concl}

We have provided in the form of a literate Haskell program a
declarative specification of a tree-based number system.
Our emphasis here was on the correctness and the
theoretical complexity bounds of our operations
rather than the packaging in a form that would compete
with a C-based arbitrary size integer package like GMP.
We have also ensured that our algorithms are
as simple as possible and
we have closely correlated our Haskell code
with the formulas describing 
the corresponding arithmetical properties.
As the algorithms involved are all novel and we have
explored genuinely uncharted territory, we are not considering
this literate program a {\em functional pearl}, as we are
by no means focusing on polishing known results, but rather
on using the niceties of functional programming
to model new concepts. For instance,
our algorithms rely on properties of blocks of iterated applications
of functions rather than the ``digits as coefficients of polynomials''
view of traditional numbering systems. While the rules are often
more complex, restricting our code
to a purely declarative subset of functional programming
made managing a fairly intricate network of mutually 
recursive dependencies much easier.

We have shown that some interesting number-theoretical entities like
Fermat and perfect numbers, and the largest known
Mersenne, Proth, Cullen, Sophie Germain and twin primes
have compact
representations with our tree-based numbers. One may observe their
common feature: they are all represented in terms of exponents of 2,
successor/predecessor and specialized multiplication operations.

But more importantly, we have shown that {\em computations} like
addition, subtraction, multiplication, bitsize, exponent of 2, that 
favor
giant numbers with {\em low structural complexity}, are performed
in constant time, or time proportional to their
structural complexity. We have also studied the best and worse case
structural complexity of our representations and shown that,
as structural complexity is bounded by bitsize, computations and
data representations are within
constant factors of conventional arithmetic even
in the worse case.

The fundamental theoretical challenge raised at this point is the following:
{\em can other number-theoretically interesting operations
expressed succinctly in terms of our tree-based data type? Is it possible to reduce the
complexity of some other important operations, besides those found so far?}
In particular, is it possible to devise comparably efficient division and modular
arithmetic operations favoring giant low structural complexity numbers? Would
that have an impact on primality and factoring algorithms?

The methodology to be used relies on two key components, that
have been proven successful so far, 
in discovering compact representations of important number-theoretical entities,
as well as low complexity algorithms for operations like  {\tt exp2},
{\tt add}, {\tt sub}, {\tt cmp}, {\tt mul} and {\tt bitsize}:
\begin{itemize}
\item partial evaluation  of functional programs with respect to known data types and operations on them,
 as well as the use of other program transformations
\item salient number-theoretical observations, similar 
to those in Props. \ref{fastexp}, \ref{bitcmp}, \ref{addeqs} \ref{subeqs},
\ref{bitineq}
that relate operations on our tree data types to 
number-theoretical identities  and algorithms.
\end{itemize}

Another aspect of future work is building a practical package
(that uses our representation only for numbers larger than the size of
the machine word) and specialize our algorithms 
for this hybrid representation. In particular,
parallelization of our algorithms, 
that seems natural given our tree representation,
would follow once the sequential performance
of the package is in a practical range.
Easier developments with practicality in mind would
involve extensions to signed integers and 
rational numbers.

\section*{Acknowledgement} This research has been supported
by NSF research grant 1018172.

\bibliographystyle{elsarticle-num}
\bibliography{INCLUDES/theory,tarau,INCLUDES/proglang,INCLUDES/biblio,INCLUDES/syn}

\begin{thebibliography}{10}
\expandafter\ifx\csname url\endcsname\relax
  \def\url#1{\texttt{#1}}\fi
\expandafter\ifx\csname urlprefix\endcsname\relax\def\urlprefix{URL }\fi
\expandafter\ifx\csname href\endcsname\relax
  \def\href#1#2{#2} \def\path#1{#1}\fi

\bibitem{knuthUp}
D.~E. Knuth, {Mathematics and Computer Science: Coping with Finiteness},
  Science 194~(4271) (1976) 1235 --1242.

\bibitem{surreal}
J.~H. Conway, {On Numbers and Games}, 2nd Edition, AK Peters, Ltd., 2000.

\bibitem{goodstein}
R.~Goodstein, On the restricted ordinal theorem, Journal of Symbolic Logic~(9)
  (1944) 33--41.

\bibitem{sac12}
P.~Tarau, D.~Haraburda, {On Computing with Types}, in: {Proceedings of SAC'12,
  ACM Symposium on Applied Computing, PL track}, Riva del Garda (Trento),
  Italy, 2012, pp. 1889--1896.

\bibitem{iterfun}
Wikipedia,
  \href{\url{http://en.wikipedia.org/w/index.php?title=Iterated_function&oldid=550061487}}{Iterated
  function --- wikipedia{,} the free encyclopedia}, [Online; accessed
  18-April-2013] (2013).
\newline\urlprefix\url{\url{http://en.wikipedia.org/w/index.php?title=Iterated_function&oldid=550061487}}

\bibitem{Coq:manual}
\mbox{The Coq development team}, \href{http://coq.inria.fr}{The Coq proof
  assistant reference manual}, LogiCal Project, version 8.4 (2012).
\newline\urlprefix\url{http://coq.inria.fr}

\bibitem{part99}
S.~Corteel, B.~Pittel, C.~D. Savage, H.~S. Wilf, On the multiplicity of parts
  in a random partition, Random Struct. Algorithms 14~(2) (1999) 185--197.

\bibitem{vitanyi}
M.~Li, P.~Vit\'{a}nyi, An introduction to Kolmogorov complexity and its
  applications, Springer-Verlag New York, Inc., New York, NY, USA, 1993.

\bibitem{gimps}
\href{\url{http:/http://www.mersenne.org/}}{{Great Internet Mersenne Prime
  Search}} (2013).
\newline\urlprefix\url{\url{http:/http://www.mersenne.org/}}

\bibitem{Rigo2001469}
M.~Rigo, Numeration systems on a regular language: arithmetic operations,
  recognizability and formal power series, Theoretical Computer Science
  269~(1–2) (2001) 469 -- 498.

\bibitem{ppdp10tarau}
P.~Tarau, Declarative modeling of finite mathematics, in: PPDP '10: Proceedings
  of the 12th international ACM SIGPLAN symposium on Principles and practice of
  declarative programming, ACM, New York, NY, USA, 2010, pp. 131--142.

\bibitem{vu09}
J.~Vuillemin, {Efficient Data Structure and Algorithms for Sparse Integers,
  Sets and Predicates}, in: Computer Arithmetic, 2009. ARITH 2009. 19th IEEE
  Symposium on, 2009, pp. 7 --14.

\end{thebibliography}

%\newpage
\section*{Appendix}

\subsection*{A subset of Haskell as an executable function notation}

We mention, for the benefit of the
reader unfamiliar with Haskell, that a notation like {\tt f x y} stands for $f(x,y)$,
{\tt [t]} represents sequences of type {\tt t} and a type declaration
like {\tt f :: s -> t -> u} stands for a function $f: s \times t \to u$
(modulo Haskell's ``currying'' operation, given the isomorphism between 
the function spaces ${s \times t} \to u$ and ${s \to t} \to u$). 
Our Haskell functions are always represented as sets
of recursive equations guided by pattern matching, conditional
to constraints (simple relations following \verb~|~ and before
the \verb~=~ symbol).
Locally scoped helper functions are defined in Haskell
after the {\tt where} keyword, using the same equational style.
The composition of functions {\tt f} and {\tt g} is denoted {\tt f . g}.
It is also customary in Haskell, when defining functions in an equational style (using {\tt =})
to write $f=g$ instead of $f~x=g~x$ (``point-free'' notation).
We also make some use of Haskell's ``call-by-need'' evaluation
that allows us to work with infinite
sequences, like the {\tt [0..]} infinite list notation, corresponding to the
set of natural numbers $\N$. Note also that the result
of the last evaluation is stored in the special Haskell
variable {\tt it}. By restricting ourselves to this {\em Haskell--}
subset, our code can also be easily transliterated into
a system of rewriting rules, other pattern-based functional
languages as well as deterministic Horn Clauses.

\subsection*{Division operations}
A fairly efficient integer division algorithm is  given here,
but it does not provide the same complexity gains as,
for instance, multiplication, addition or subtraction. Finding
a ``one block at a time'' division algorithm, if possible at all,
is subject of future work.
\begin{code}
div_and_rem x y | LT == cmp x y = (E,x)
div_and_rem x y | y /= E = (q,r) where 
  (qt,rm) = divstep x y
  (z,r) = div_and_rem rm y
  q = add (exp2 qt) z
    
  divstep n m = (q, sub n p) where
    q = try_to_double n m E
    p = leftshiftBy q m    
    
  try_to_double x y k = 
    if (LT==cmp x y) then s' k
    else try_to_double x (db y) (s k)   
          
divide n m = fst (div_and_rem n m)
remainder n m = snd (div_and_rem n m)
\end{code}
%\begin{comment}
\subsection*{Integer square root}
A fairly efficient integer square root, using Newton's method is implemented as follows:
\begin{code}
isqrt E = E
isqrt n = if cmp (square k) n==GT then s' k else k where 
  two = i E
  k=iter n
  iter x = if cmp (absdif r x)  two == LT
    then r 
    else iter r where r = step x
  step x = divide (add x (divide n x)) two    
\end{code}
%\end{comment}

\begin{codeh}
-- experiments
 
collatz E = E
collatz x| o_ x = add x (o x)
collatz x| i_ x = hf x

ncollatz E = [E]
ncollatz x = x : ncollatz (collatz x)

c0 = take 1000  $(map (n.tsize) (ncollatz mersenne48))
c1 = take 100000  $(map (n.tsize) (ncollatz perfect48))
c2 = take 1000  $(map (n.tsize) (ncollatz prothPrime))
c3 = take 1000  $(map (n.tsize) (ncollatz cullenPrime))
c4 = take 1000  $(map (n.tsize) (ncollatz genFermatPrime))
c5 = take 1000  $(map (n.tsize) (ncollatz woodallPrime))
c6 = take 1000  $(map (n.tsize) (ncollatz sophieGermainPrime))

-- the syracuse sequence on large primes
t0 = take 1000  $(map (n.tsize) (nsyr mersenne48))
t1 = take 1000  $(map (n.tsize) (nsyr perfect48))
t2 = take 1000  $(map (n.tsize) (nsyr prothPrime))
t3 = take 1000  $(map (n.tsize) (nsyr cullenPrime))
t4 = take 1000  $(map (n.tsize) (nsyr genFermatPrime))
t5 = take 1000  $(map (n.tsize) (nsyr woodallPrime))
t6 = take 1000  $(map (n.tsize) (nsyr sophieGermainPrime))

tc = cmp (add cullenPrime  mersenne48) (add prothPrime mersenne48)

tf x = map (n.tsize) (nsyr (fermat (t x)))
tm x = map (n.tsize) (nsyr (mersenne (t x)))

assertOplus k x y =  (a==b) where
   a = add (otimes k x) (otimes k y)
   b = itimes k (add x y) 
   
assertIplus k x y =  (a==b) where
   a = add (itimes k x) (itimes k y)
   b = s' (s' (itimes k (s (s (add x y)))))
   
assertOiplus k x y =  (a==b) where
   a = add (otimes k x) (itimes k y) 
   b = s' (itimes k (s (add x y)))   

assertIoplus k x y =  (a==b) where
   a = add (itimes k x) (otimes k y) 
   b = s' (itimes k (s (add x y)))   
\end{codeh}

\end{document}